\title{Quadratic type checking for objective type theory}% Please add
\author{Benno van den Berg}{Institute for Logic, Language and Computation (ILLC), University of Amsterdam, P.O. Box 94242, 1090 GE Amsterdam, The Netherlands}{b.vandenberg3@uva.nl}{https://orcid.org/0000-0002-0469-0788}{}% mandatory, please use full name; only 1 author per \author macro; first two parameters are mandatory, other parameters can be empty. Please provide at least the name of the affiliation and the country. The full address is optional
\author{Martijn den Besten}{Institute for Logic, Language and Computation (ILLC), University of Amsterdam, P.O. Box 94242, 1090 GE Amsterdam, The Netherlands}{m.a.denbesten@uva.nl}{}{}
\authorrunning{B. van den Berg and M. den Besten}% mandatory. First: Use abbreviated first/middle names. Second (only in severe cases): Use first author plus 'et al.'
\keywords{Homotopy type theory, Polynomial time algorithms, categorical semantics}% mandatory; please add comma-separated list of keywords
\begin{document}

\maketitle

% mandatory: add short abstract of the document
\begin{abstract}
We introduce a modification of standard Martin-L\"of type theory in which we eliminate definitional equality and replace all computation rules by propositional equalities. We show that type checking for such a system can be done in quadratic time and that it has a natural homotopy-theoretic semantics.
\end{abstract}

\section{Introduction}

Among all formal systems for constructive mathematics, Martin-L\"of's constructive type theory has a special status in that it successfully embodies many important constructivist ideas. Among them is the idea that the meaning of a mathematical statement is fully explained by what counts as a proof of (or evidence for, or a construction of) that statement. In fact, for all practical purposes a mathematical statement can be identified with the type (or set) of its proofs: this is what is meant by the \emph{propositions as types} idea. As a result, in type theory
\[ a \in \sigma \]
can both be read as saying that $a$ is an element of the set $\sigma$ and that $a$ is a proof of the proposition $\sigma$.

The next question is whether the relation $a \in \sigma$ should be decidable (possibly appearing in a context $\Gamma$, but we will ignore that for now). This is the case in Martin-L\"of's type theory and, indeed, from a philosophical point of view, this seems desirable. If $a$ is a proof of (or evidence for) $\sigma$, then seeing $a$ should convince any rational being of the truth of $\sigma$, and we would expect any rational being to be sufficiently reflective to recognise that $a$ makes assent to $\sigma$ inevitable. Put differently, if it is reasonable to simultaneously be presented with $a$ and be unsure about the truth of $\sigma$, then $a$ is simply not conclusive evidence in favour of $\sigma$. As summarised by Kreisel: ``we can recognize a proof when we see one''.\footnote{Full quote: ``The \emph{sense} of a mathematical assertion denoted by a linguistic object $A$ is intuitionistically determined (or understood) if we have laid down what constructions constitute a \emph{proof} of $A$, i.e., if we have a construction $r_A$ such that, for any construction $c$, $r_A(c) = 0$ if $c$ is a proof of $A$ and $r_A(c) = 1$ if $c$ is not a proof of $A$: the logical particles in this explanation are interpreted truth functionally, since we are adopting the basic intuitionistic idealization that we can recognize a proof when we see one, and so $r_A$ is decidable. (Note that this applies to \emph{proof}, not \emph{provability}.)'' \cite{kreisel62}}

But also from the point of view of implementing type theory it is desirable to have decidable type checking. Indeed, most proof assistants based on type theory use type checking as their mechanism for verifying proofs. In this connection it should be stressed that we are concerned here with derivability, not with validity of derivations. Whereas for many systems the question whether some syntactic object is a valid derivation in the system or not is decidable, type theory has in addition the special property that the question whether $a \in \sigma$ is \emph{derivable} or not is decidable.

But once one has become convinced that we should make this question decidable, a natural next question is how difficult from a complexity-theoretic point of view it should be to check whether $a \in \sigma$ holds. It turns out that in most versions of type theory, including most standard formulations of Martin-L\"of type theory, the worst-case upper bounds are quite horrendous. In fact, we have the following theorem by Statman \cite{statman79}:
\begin{theorem}\label{statmanstheorem}
Equality in the typed lambda calculus is not elementary recursive.
\end{theorem}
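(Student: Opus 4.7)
The plan is to establish Statman's theorem by showing that $\beta$-equality of simply typed $\lambda$-terms is powerful enough to encode computations with non-elementary resource bounds. The strategy is to exhibit a uniform family of equality queries $M_k =_\beta N_k$, each of size polynomial in $k$, whose resolution requires time not bounded by any fixed tower of exponentials in $k$.

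First I would fix a base type $o$ and introduce the ascending sequence of types $\tau_0 = o \to o$ and $\tau_{k+1} = \tau_k \to \tau_k$. At each level $k$ one has Church numerals $\overline{n}_k = \lambda f.\lambda x.f^n(x) : \tau_k$, and higher-level numerals represent iteration of iteration. By judicious composition and self-application of such iterators, one can build closed terms of size $O(k)$ whose $\beta$-normal forms have size on the order of a tower of exponentials of height $k$: intuitively, type level $1$ yields exponentiation, level $2$ yields iterated exponentiation, and so on, climbing the Grzegorczyk hierarchy one rung per type level.

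Next I would use these iterators to define, for each $k$, terms $M_k$ and $N_k$ that encode the execution of a universal Turing machine for $2 \uparrow\uparrow k$ steps on some fixed non-elementary decision problem. The equality $M_k =_\beta N_k$ should be arranged to hold exactly when the encoded machine accepts, by standard techniques: Church encoding of machine configurations, and iteration of a $\lambda$-definable one-step transition function using the higher-type Church numeral for $2 \uparrow\uparrow k$. An elementary decision procedure for $\beta$-equality would then yield an elementary decision procedure for the chosen non-elementary problem, contradicting the time hierarchy theorem.

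The main obstacle is the faithfulness of this reduction: it is not enough to remark that normal forms can be huge, since one must rule out clever shortcuts that decide $M_k =_\beta N_k$ without effectively performing the simulation. A clean way to discharge this is to first establish a definability theorem — every function in the elementary hierarchy is $\lambda$-definable, at a type level growing with its Grzegorczyk rank, via Church numerals — and then diagonalise over the elementary hierarchy by letting the type level of the construction grow with $k$. The technical heart of the argument is thus a precise correspondence between type level and computational complexity, from which the non-elementary lower bound on $\beta$-equality follows by a standard padding and diagonalisation argument.
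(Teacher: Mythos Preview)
The paper does not contain a proof of this theorem at all: it is quoted as a classical result and attributed to Statman via the citation \cite{statman79}, with no argument given. So there is no ``paper's own proof'' against which to compare your proposal.

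As for the substance of your sketch: the overall shape is sound and is closer to Mairson's later simplification than to Statman's original argument. Statman's 1979 proof proceeds by reducing from the decision problem for Henkin's theory of propositional types, which Meyer had already shown to be non-elementary; the encoding of bounded machine runs via higher-type Church iterators that you outline is the Mairson route. One point you pass over deserves more care: the simply typed calculus has no general recursion, so ``encode the execution of a universal Turing machine'' must be read as simulating a machine for an \emph{a priori} bounded number of steps at a \emph{fixed} type, with configurations (tape, head, state) represented uniformly at that type. Making this uniform at a single type while keeping the term size polynomial in $k$ is where the real work lies, and your final paragraph correctly identifies that the faithfulness of the reduction --- ruling out shortcuts that bypass the simulation --- is the delicate step. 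Your proposed remedy (a definability theorem matching type level to Grzegorczyk rank, followed by diagonalisation) is reasonable but is itself a substantial theorem; in practice one instead reduces from an already-known non-elementary problem, which is what both Statman and Mairson do.
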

This has the consequence that the question whether $a \in \sigma$ is derivable or not, is decidable, but not feasibly so.

It should be said that from a philosophical point of view, this is slightly odd: this says that in most versions of type theory it may be infeasible to decide whether $a$ should count as compelling evidence in favour of $\sigma$. This means that one may be presented with evidence in favour of $\sigma$, but the task of deciding whether this is indeed conclusive evidence may be, from a practical point of view, impossible. This raises the question to which extent it should still count as \emph{compelling} evidence.

Again, this is more than just a philosophical problem. In an interesting paper \cite{geuverswiedijk08},  Geuvers and Wiedijk write:
\begin{quote}
  In theorem provers based on type theory the main performance bottleneck is the convertibility check: if the calculated type of a term $M$ is $A$, but it used in a context where the type should be $B$, then the system needs to verify that $A =_{\beta\iota\delta} B$, where $\delta$ is the equality arising from definitional expansion (unfolding definitions) and $\iota$ is the equality arising from functions defined by (higher order primitive) recursion. In fact, the inefficiency of the convertibility check means that type correctness is in practice only semi-decidable. Although in theory it is decidable whether a term $M$ has type $A$, in practice when it is not correct the system could be endlessly reducing and would not terminate in an acceptable time any more.
\end{quote}
To see what the problem is, let us consider the following question: is
\[ \vdash {\bf refl}(A(3,2^{65536} -3)) \in {\rm Id}(\mathbb{N}, A(3,2^{65536} -3), A(4,3)) \]
derivable, where $A$ is the Ackermann function?\footnote{This slightly modifies an example mentioned in Geuvers and Wiedijk.} Most type checkers would try to answer this question by normalising both $A(3,2^{65536} -3)$ and $A(4,3)$, and will not reach an answer before the end of the physical universe.

As Geuvers and Wiedijk comment, this shows how different the proofs terms in type theory are from what most mathematicians would imagine proofs to be. First of all, we would expect to be able to just read a proof (``follow it with our finger'' as Geuvers and Wiedijk write) and then become convinced of the truth of some mathematical statement. Secondly, we would expect proofs to have some explanatory function: it would tell us \emph{why} statements are true. If we take the term ${\bf refl}(A(3,2^{65536} -3))$ above, this hardly explains why $A(3,2^{65536} -3) = A(4,3)$: it basically just says that it is true.

Motivated by these questions, Geuvers and Wiedijk present a modification of type theory in which type checking is efficiently decidable. Their idea is that all these difficulties come from the fact that to do type checking a computer has to do many conversions, none of which are stored in the proof term. And, indeed, they show that if we enrich our proof terms with explicit conversions, we are able to make type checking feasible.

The purpose of this paper is to present a different implementation of their idea. In our view, our method has three advantages over theirs. First of all, it is systematic in that the way we store conversions in the proof terms relies on one simple idea. Secondly, our solution makes use of insights coming from homotopy type theory, so in that way it connects with a lot of exciting research which is currently happening. Thirdly, our proof of feasibility makes more realistic assumptions. For instance, Geuvers and Wiedijk assume that checking syntactic equality of strings can be done in unit time, while we assume that this requires a time linear in the length of the smallest string.

Our starting point is the observation that Martin-L\"of type theory has two kinds of equality: judgemental (or definitional) equality and propositional equality. Recent developments in HoTT strongly suggest that it should be possible to eliminate definitional equality completely in favour of propositional equality, and, indeed, that is what we do here. The trick is to state every computation rule in Martin-L\"of type theory (which is naturally thought of as a conversion) as a propositional equality, including the computation rule for the identity type itself. The first main result of this paper is that in doing this we obtain a system in which type checking is efficiently decidable: indeed, it is decidable in quadratic time.

Clearly, such a system is weaker than the standard systems, so it is natural to wonder how much weaker it is. We claim that this system still suffices for doing all of constructive mathematics; indeed, we conjecture that in such a system it should be possible to formalise most of the HoTT book. Of course, a detailed verification of such a claim would be very time and paper consuming, and we will not attempt that here. But it is generally understood (and this is also backed up by the success of cubical type theory) that having the computation rule for the identity type as a propositional equality is not an obstacle to doing homotopy type theory (see also papers by Coquand and Danielsson \cite{coquanddanielsson13} and Bocquet \cite{bocquet20}; we are also aware of a talk by Nicolai Kraus at TYPES-2017). As another indication, we will prove, and this will be our second main result, that the syntactic category of such a weak type theory is a \emph{path category with homotopy $\Pi$-types} (see \cite{bergmoerdijk18,vandenberg18,vandenberg20,denbesten20}). Given this work it should be clear that a lot of HoTT can be formalised in such a setting.

Let us finish by pointing out that there are other reasons why it is interesting to explore the consequences of abolishing definitional equality in favour propositional equality, which have nothing to do with feasibility. 

First of all, weaker rules have more models. As the work on cubical type theory makes clear, this is especially true from a constructive point of view.

Secondly, eliminating definitional equality helps us to understand it better. The idea here is that we do not fully understand what definitional equality is doing for us: to really understand this, the best way may be to try to live without it and see what happens. (For instance, an interesting question here is whether univalence still implies function extensionality in such a setting.)

Thirdly, it is hard to find objective grounds for deciding which equalities are definitional and which ones are only propositional. The way things are done usually, for an arbitrary $x \in \mathbb{N}$, we have that $x + 0 = x$ holds definitionally, while $0 + x = x$ holds only propositionally. In fact, it is possible to define a variant of plus for which the opposite holds. This is not just odd, but can also be akward in formalisations. Even Martin-L\"of himself suggested to the authors that such decisions are best made on pragmatic grounds. For the last reason the authors of this paper like to think of the type theory we are proposing as a kind of \emph{objective type theory}.\footnote{Here is an interesting quote in that respect: ``My personal disenchantment with dependent type theories coincides with the decision to shift from extensional to intensional equality. This meant for example that $0 + n = n$ and $n + 0 = n$ would henceforth be regarded as fundamentally different assertions, one an identity holding by definition and the other a mere equality proved by induction. Of course I was personally upset to see several years of work, along with Constable's Nuprl project, suddenly put beyond the pale. But I also had the feeling that this decision had been imposed on the community rather than arising from rational discussion. And I see the entire homotopy type theory effort as an attempt to make equality reasonable again.'' \cite{paulson18}} 

The contents of this paper are therefore as follows. In Section 2 we present objective type theory. In Section 3 we prove that type checking in objective type theory can be done in quadratic time. In Sections 4 -- 6 we prove that the classifying category associated to objective type theory is a path category with homotopy $\Pi$-types. We do this in several steps: we construct the classifying category in Section 4, show that it is a path category in Section 5 and construct the homotopy $\Pi$-types in Section 6. We end with some directions for future research in Section 7.

\section{Objective type theory}

The goal of this section will be to introduce objective type theory. 

\begin{remark}\label{somesyntacticconventions}
We imagine that this type theory has been formulated using some device like De Bruijn-indices so that there is no difference between $\alpha$-equivalence and syntactic equality of expressions. However, for the sake of human readability, we will be using variables. But this means that the notation $s \equiv t$, which we will use for syntactic equality of expressions, will act like $\alpha$-equivalence.
\end{remark}

\subsection{Basic judgements and rules} The type theory we wish to introduce derives statements of one of the following three forms:
\begin{displaymath}
\begin{array}{lll} \vdash \Gamma \, \mbox{Ctxt} & \Gamma \vdash \sigma \, \in \, \mbox{Type} & \Gamma \vdash a \in \sigma
\end{array}
\end{displaymath}
The meaning of the first statement is that $\Gamma$ is a \emph{context}, that of the second is that $\sigma$ is a \emph{type} in context $\Gamma$, whilst the third means that $a$ is a \emph{term} of type $\sigma$ in context $\Gamma$.

Our type theory has only three basic rules. The first two tell us how to form contexts, the idea being that context $\Gamma$ is a list of the form
\[  [ \, x_1 : \sigma_1, x_2 : \sigma_2, \ldots, x_{n-1} : \sigma_{n-1} \, ] \]
with the $x_i$ being distinct variables. In that spirit, the two context formation rules are:
\begin{center}
\begin{tabular}{cc}
\begin{tabular}{c}
\\
\hline
$\vdash [] \, \mbox{Ctxt}$
\end{tabular}
&
\begin{tabular}{c}
$\vdash \Gamma \, \mbox{Ctxt} \qquad \Gamma \vdash \sigma \, \in \, 			\mbox{Type} \qquad x \, \mbox{ fresh}$ \\
\hline
$\vdash [\Gamma, x:\sigma] \, \mbox{Ctxt}$
\end{tabular}
\end{tabular}
\end{center}

The third and final basic rule of our type theory is the following variable rule, which says that if $x \in \sigma$ occurs in a context $\Gamma$, then $\Gamma \vdash x \in \sigma$ holds. Formally:
\begin{center}
\begin{tabular}{c}
$\vdash [\Gamma, x \in \sigma, \Delta] \, \mbox{Ctxt} $ \\
\hline
$\Gamma, x \in \sigma, \Delta \vdash x \in \sigma $
\end{tabular}
\end{center}

\subsection{The syntax} The other rules follow the usual pattern of Martin-L\"of type theory, in which each type constructor comes with four rules (formation, introduction, elimination and computation). In the appendix the reader can find the rules for identity and $\Pi$-types. They are the usual ones, except that the computation rule holds only \emph{propositionally}: that is, it states a propositional instead of a definitional equality. This means that if we think of type theory as some sort of generalised algebraic theory, there will only be constructors, but no equations. In the formulation of these rules, we have used $[t_1,\ldots,t_n/x_1,\ldots,x_n]$ for the result of a (capture avoiding) substitution of $t_1,\ldots,t_n$ for $x_1,\ldots,x_n$, respectively. In addition, $[x_1,\ldots,x_n]t$ means that the variables $x_1,\ldots,x_n$ have become bound in the term or type $t$.

We emphasise that we think of objective type theory as an open framework which need not only include the $\Pi$- and Id-types, but could also include rules for $\Sigma$-types and a natural numbers type, for instance. Again, we imagine that these would be formulated in the usual way but with the computation rules in the form of propositional equalities.

\begin{remark}\label{contextualparameter}
In the appendix the reader can also find strengthened versions of these rules with a general contextual parameter. An important step in the development below will be the proof that these are admissable.
\end{remark}

\subsection{Some simple properties of objective type theory} We start off by making a number of simple observations about our axiomatisation of objective type theory.

\begin{lemma}\label{admissabilityofweakeningandsubst} {\rm (Admissability of weakening and substitution)}
The following weakening and substitution rules are admissable in the system:
\begin{center}
\begin{tabular}{cc}
\begin{tabular}{c}
$\Gamma, \Delta \vdash \mathcal{J} \qquad \Gamma \vdash \sigma \, \mbox{\emph{type}} $\\
\hline
$\Gamma, x \in \sigma, \Delta \vdash \mathcal{J} $
\end{tabular}
&
\begin{tabular}{c}
$\Gamma \vdash a \in  \sigma  \qquad \Gamma, x \in \sigma, \Delta \vdash \mathcal{J}$\\
\hline
$\Gamma, \Delta[a/x] \vdash \mathcal{J}[a/x] $
\end{tabular}
\end{tabular}
\end{center}
\end{lemma}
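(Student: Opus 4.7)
The plan is to prove both statements simultaneously by induction on the structure of the derivation of the principal judgement $\mathcal{J}$, establishing weakening first and then invoking it inside the substitution argument.

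For weakening, I would induct on the derivation of $\Gamma, \Delta \vdash \mathcal{J}$. The base case is the variable rule: any declaration $y \in \tau$ occurring in $\Gamma, \Delta$ still occurs in $\Gamma, x\in\sigma, \Delta$, and since $x$ must be fresh for $\Gamma, \Delta$ (for the extended context to be well-formed) we have $y \not\equiv x$, so the conclusion is immediate. For the context-formation rule, I would walk down $\Delta$ and re-derive each of its entries in the extended context using the inductive hypothesis. For every remaining rule (formation, introduction, elimination, and the now-propositional computation rule) the premises are judgements in either the same context or a context locally extended by bound variables, so the inductive hypothesis applies uniformly and the same rule can simply be re-applied after weakening.

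For substitution, I would then induct on the derivation of $\Gamma, x\in\sigma, \Delta \vdash \mathcal{J}$. The only substantive case is the variable rule, which splits according to whether the declared variable is $x$, lies in $\Gamma$, or lies in $\Delta$. If it is $x$, the conclusion $\Gamma, \Delta[a/x] \vdash a \in \sigma$ is obtained by weakening $\Gamma \vdash a \in \sigma$ along $\Delta[a/x]$, using the lemma just proved; if it lies in $\Gamma$, the substitution does nothing and we are done; if it lies in $\Delta$, the variable rule applies directly in the substituted context. All remaining rules propagate by the inductive hypothesis together with the syntactic fact that $(\cdot)[a/x]$ commutes with each term- and type-former, including the binding constructs $[y_1,\ldots,y_n]t$; by Remark~\ref{somesyntacticconventions}, the De Bruijn representation eliminates any capture issues so this commutation is literally $\alpha$-equality.

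The main obstacle will be the bookkeeping behind this last syntactic fact: one has to state and verify the usual commutation lemmas for substitution under binders (notably $t[a/x][b/y] \equiv t[b/y][a[b/y]/x]$ under the appropriate freshness assumption), and to check that they apply uniformly to every rule that objective type theory introduces, namely the $\Pi$- and Id-rules from the appendix as well as any future $\Sigma$- or $\mathbb{N}$-rules added in the same style. This is entirely routine once the De Bruijn convention is in place, and in fact the absence of definitional equality removes the traditional conversion-rule case that usually complicates such proofs in standard Martin-L\"of type theory, so the induction itself presents no real difficulty.
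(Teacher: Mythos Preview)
Your proposal is correct and follows exactly the approach the paper takes: induction on the derivations of $\Gamma,\Delta \vdash \mathcal{J}$ and $\Gamma, x\in\sigma, \Delta \vdash \mathcal{J}$, respectively. The paper in fact states only that one-line induction hint, so your write-up is considerably more detailed than the original while remaining faithful to it.
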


\begin{proof}
By induction on the derivations of $\Gamma, \Delta \vdash \mathcal{J} $ and $\Gamma, x \in \sigma, \Delta \vdash \mathcal{J}$, respectively.
\end{proof}

\begin{lemma}\label{uniquederivability}
 Each derivable judgement has a unique derivation.
\end{lemma}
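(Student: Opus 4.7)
The plan is to proceed by structural induction on derivations, exploiting the fact that, in contrast to standard Martin-L\"of type theory, objective type theory has no conversion rule: every rule of the system is uniquely determined by the syntactic shape of its conclusion, and its premises are uniquely determined by that conclusion. So the proof is really an exercise in carefully enumerating cases.

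First I would split the argument according to the three forms of judgement. For context well-formedness $\vdash \Gamma \, \mbox{Ctxt}$, the applicable rule is fixed by whether $\Gamma$ is empty or of the form $[\Gamma', x : \sigma]$; in the latter case the premises $\vdash \Gamma' \, \mbox{Ctxt}$ and $\Gamma' \vdash \sigma \in \mbox{Type}$ are reconstructed verbatim from the conclusion, so uniqueness follows by induction on length. For typing judgements $\Gamma \vdash \sigma \in \mbox{Type}$, each type former ($\Pi$, $\mathrm{Id}$, and any additional ones such as $\Sigma$ or $\mathbb{N}$) is distinguished by the outermost symbol of $\sigma$, so again the formation rule is determined, and its premises are sub-expressions of $\sigma$ together with the ambient context, hence uniquely determined. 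Similarly, for term judgements $\Gamma \vdash a \in \sigma$, each constructor and eliminator is tagged by a distinct syntactic head symbol (${\bf refl}$, $\lambda$, application, $J$, etc.), so the applied rule is read off from $a$, and the premises are recovered from the subterms of $a$ and from $\sigma$.

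The main obstacle, and really the only place where one must be careful, is the variable rule, which produces $\Gamma, x \in \sigma, \Delta \vdash x \in \sigma$. In principle one worries that the splitting of the context into $\Gamma$, the declaration $x \in \sigma$, and $\Delta$ might not be uniquely determined. This is where the freshness side condition in the context formation rule does the work: since every declared variable in a well-formed context is distinct (as guaranteed inductively by the ``$x$ fresh'' premise), $x$ occurs at exactly one position of $[\Gamma, x \in \sigma, \Delta]$, and that position fixes both $\Gamma, \Delta$ and $\sigma$. The single premise $\vdash [\Gamma, x \in \sigma, \Delta] \, \mbox{Ctxt}$ is then the same in every candidate derivation, and uniqueness of its derivation is handled by the already-established case for context judgements.

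With these observations in place, the induction step is straightforward: for any derivable judgement, the last rule used is forced by the conclusion, its premises are forced by the conclusion, and by induction each premise has a unique derivation; hence the whole derivation tree is unique. I would note in passing that this lemma is precisely what will allow the type-checking algorithm in Section 3 to be deterministic, and that it is exactly the absence of definitional equality (and therefore of a conversion rule competing with the introduction rules) that makes the statement true.
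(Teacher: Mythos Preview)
Your proposal is correct and follows essentially the same approach as the paper: both argue that the last rule of any derivation is determined by the outermost constructor of the conclusion (context former, type former, or term former), whence uniqueness follows by induction. Your treatment is in fact more careful than the paper's terse version, particularly in handling the variable rule; the paper sidesteps that issue by appealing to the De Bruijn convention of Remark~\ref{somesyntacticconventions}, under which the variable itself already names its position in the context.
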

\begin{proof}
 The reason is that each judgement, whether it is $\vdash \Gamma \, \mbox{Ctxt}, \Gamma \vdash \sigma \in \mbox{Type}$ or $\Gamma \vdash a \in \sigma$ appears as the conclusion of at most rule in our system. Indeed, this is clear for contexts, because the first two basic rules are the only ones introducing contexts into the system. For judgements of the form $\Gamma \vdash \sigma \in {\rm Type}$ we look at the main type constructor in $\sigma$: the introduction rule for that type constructor is the only one introducing judgements of that shape into the system. Similarly, for judgements of the form $\Gamma \vdash a \in \sigma$ we look at the main term constructor in $a$: the rule introducing that constructor into the system is the only one which has a conclusion with that precise shape.
\end{proof}

In a similar fashion one shows: 
\begin{lemma}\label{uniquenessoftypes} {\rm (Uniquess of types)}
If $\Gamma \vdash a \in \sigma$ and $\Gamma \vdash a \in \tau$, then $\sigma \equiv \tau$.
\end{lemma}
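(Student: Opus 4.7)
The plan is to argue by induction on the structure of the term $a$, leveraging Lemma \ref{uniquederivability}. By that lemma each derivable judgement $\Gamma \vdash a \in \sigma$ has a unique derivation, whose final rule is determined by the main constructor of $a$. So for both $\Gamma \vdash a \in \sigma$ and $\Gamma \vdash a \in \tau$ the last inference step is the \emph{same} rule, and it suffices to verify, case by case, that for every term-introduction or elimination rule the type appearing in the conclusion is uniquely determined by the context $\Gamma$ together with the constructor and its arguments (appealing to the inductive hypothesis on proper subterms where needed).

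I would begin with the variable rule: if $a \equiv x$, then both derivations end with the variable rule and therefore both $x \in \sigma$ and $x \in \tau$ must occur in $\Gamma$. Since a context is a list of \emph{distinct} variables with their type annotations (by the context-formation rules of Section 2.1), this forces $\sigma \equiv \tau$.

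Next I would walk through the constructors listed in the appendix. For $\Pi$-introduction the term has the form of a $\lambda$-abstraction that carries its domain and codomain as annotations, so the $\Pi$-type in the conclusion is read off directly. For $\Pi$-elimination, the term is an application $f(b)$; by the inductive hypothesis the type of $f$ is unique, hence its codomain is determined, and the substitution appearing in the conclusion is then also determined by $b$. For $\mathrm{Id}$-introduction, $a \equiv \mathbf{refl}(b)$, whose type $\mathrm{Id}(\rho, b, b)$ is fixed once $b$'s type $\rho$ is fixed, again by the inductive hypothesis. For $\mathrm{Id}$-elimination (the $J$-eliminator), the motive and endpoints appear as explicit arguments of the term, so the conclusion-type, being an instance of the motive, is uniquely recoverable. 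The propositional computation rules introduce $\mathbf{refl}$-terms (or analogous canonical witnesses) whose types are identity types whose arguments are in turn determined by the indices of the rule; these are therefore no harder than the $\mathbf{refl}$ case.

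The main obstacle is not conceptual but bureaucratic: one must check each rule listed in the appendix, and for the elimination rules verify that the substitution instance in the conclusion is recoverable from the displayed subterms, using the inductive hypothesis on those subterms to pin down any ``hidden'' data (such as the codomain of the function in an application, or the motive in $J$). This goes through smoothly precisely because objective type theory is formulated so that every term constructor carries enough explicit type information that its synthesised type is uniquely determined; no appeal to definitional equality is required, which is what makes the proof cleaner here than in the standard setting.
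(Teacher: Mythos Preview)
Your approach is correct and is essentially the paper's own: inspect the main term constructor, observe that only one rule can have produced it, and read off the conclusion type. One small correction worth making is that in this system every term constructor is \emph{fully annotated}: application is $\mathbf{app}(A,[x]B,f,a)$, reflexivity is $\mathbf{refl}(A,a)$, the eliminator is $\mathbf{idrec}(A,[x,y,u]P,a,b,p,[x]d)$, and so on. Hence the codomain $B$, the underlying type $A$, and the motive $P$ are literal subexpressions of the term, not data to be recovered via the inductive hypothesis on $f$ or $b$. The induction you set up is therefore unnecessary beyond the variable case; for every non-variable constructor the output type is a fixed syntactic function of the displayed arguments, and uniqueness is immediate.
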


\section{Quadratic type checking}

We now come to the first main result of this paper.

\begin{theorem}\label{maintheorem}
The question whether $\Gamma \vdash a \in \sigma$ is derivable or not can be decided in quadratic time. 
\end{theorem}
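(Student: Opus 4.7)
The plan is to reduce type checking to type inference, exploiting Lemmas~\ref{uniquederivability} and~\ref{uniquenessoftypes}. By uniqueness of types there is at most one $\tau$ (up to the syntactic identity $\equiv$) with $\Gamma \vdash a \in \tau$, and by uniqueness of derivations this $\tau$ (if it exists) is fully determined by a single structurally recursive descent through $a$. So I would define a partial inference function $\mathrm{infer}(\Gamma, a)$ that returns either the unique $\tau$ with $\Gamma \vdash a \in \tau$ or a failure. Then, to decide whether $\Gamma \vdash a \in \sigma$, one first checks that $\vdash \Gamma\,\mathrm{Ctxt}$ and that $\Gamma \vdash \sigma \in \mathrm{Type}$ using analogous recursions, computes $\mathrm{infer}(\Gamma, a)$, and tests $\mathrm{infer}(\Gamma,a) \equiv \sigma$ syntactically.

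The algorithm is dictated by the form of the input. For a variable $x$ one reads off its type from $\Gamma$; for an introduction or elimination term, one recursively infers the types of the immediate subterms, checks syntactically that they match the premises of the relevant rule, and returns the conclusion type. Well-formedness of contexts is checked by peeling bindings from the right, and well-formedness of types by inspecting their head constructor. The critical point that distinguishes objective type theory from the usual formulation is that, because no computation rule holds definitionally, all premise-matching reduces to the syntactic identity $\equiv$: no reduction, unfolding, or higher-order unification is needed at any step.

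The hard part is the complexity bound. The difficulty is that elimination rules produce types by substitution: $\Pi$-elimination returns $\tau[a/x]$, and $J$-elimination returns the motive applied to various terms. Performed eagerly, a chain of $k$ nested eliminations can make the inferred type quadratic in the input size, and a naive final comparison then gives a cubic algorithm. To obtain $O(n^2)$, I would represent inferred types as pairs consisting of a template (which is always a subterm of the original input) together with a pending explicit substitution whose range terms are likewise subterms of the original input. In this representation each inference step extends the pending substitution in amortised constant work, so inference contributes $O(n)$ overall; the cost is then dominated by the $O(n)$ syntactic identity checks (one for each recursive step and one at the end), each of which, when expanded through the explicit substitution, takes $O(n)$ time. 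Multiplying gives the desired $O(n^2)$.

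The main obstacle is setting up this explicit-substitution representation carefully enough that the identity checks, together with variable lookup, freshness handling (via the de Bruijn convention of Remark~\ref{somesyntacticconventions}), and the composition of substitutions that accumulates across nested eliminations, all remain correct and admit a linear cost per check. Once that bookkeeping is in place, correctness of the algorithm follows directly from Lemma~\ref{uniquederivability} (each rule is the unique one whose conclusion matches the head constructor) and Lemma~\ref{uniquenessoftypes} (the inferred type is the only candidate), and the quadratic bound follows from the amortised analysis sketched above.
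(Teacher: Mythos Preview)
Your strategy is sound and would yield the quadratic bound, but it is genuinely different from the route the paper takes, and more elaborate than necessary.

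The paper does \emph{not} do type inference. Instead it sets up two promise problems: decide $\Gamma \vdash^{*} a \in \sigma$ under the promise $\Gamma \vdash \sigma \in \mathrm{Type}$, and decide $\Gamma \vdash^{*} \sigma \in \mathrm{Type}$ under the promise $\vdash \Gamma\ \mathrm{Ctxt}$. It then proves by simultaneous induction that these are decidable in time $k_2\lvert a\rvert^2 + k_1\lvert\sigma\rvert$ and $k_2\lvert\sigma\rvert^2$ respectively, and combines them at the end. Two devices make this work without any explicit-substitution machinery. First, a timeout trick: one may assume throughout that the judgement under consideration \emph{is} derivable, since exceeding the time bound is itself a negative answer. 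Under that assumption, a substitution such as $B[a/x]$ has length exactly $\lvert\sigma\rvert$, so forming and comparing it costs $k_1\lvert\sigma\rvert$. Second, the promise that $\sigma$ is already well-formed, together with unique derivability, lets many premises be skipped outright (e.g.\ for $\lambda$ and for the propositional computation constants, no recursive calls are needed beyond the string comparison with $\sigma$). The result is a very short case analysis with an explicit inductive time bound.

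Your inference-based approach buys generality (you recover the type rather than merely checking it) at the cost of the delayed-substitution bookkeeping. Note, however, that your worry about \emph{composing} substitutions is largely illusory here: because every eliminator carries its full type annotations, the inferred type at each node is always (template, single substitution) with both pieces literal subterms of the input; nesting eliminators restarts the template from the new annotation rather than composing. So your approach is actually simpler than you suggest, though still heavier than the paper's, which sidesteps the whole issue by always having the concrete $\sigma$ in hand.
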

  
\begin{proof}
When estimating the time it takes to decide $\Gamma \vdash a \in \sigma$, we may always assume that this judgement is actually derivable. For if it is not and our decision procedure exceeds this time estimate without having reached a decision, we will simply conclude that $\Gamma \vdash a \in \sigma$ must not have been derivable, time-out the computation and output \texttt{false}. Write $\Gamma \vdash^{*} a \in \sigma$ for the promise problem of deciding $\Gamma \vdash a \in \sigma$ with promise $\Gamma \vdash \sigma \in {\rm Type}$. In other words, in the starred version of the problem it is permitted to give a wrong answer whenever $\Gamma \nvdash \sigma \in {\rm Type}$. Similarly, write $\Gamma \vdash^{*} \sigma \in {\rm Type}$ for the promise problem of deciding $\Gamma \vdash \sigma \in {\rm Type}$ with promise $\vdash \Gamma \, {\rm Ctxt}$. Denote the length of a string $s$ by $\lvert s \rvert$. It is sufficient to prove that $\Gamma \vdash^{*} a \in \sigma$ and $\Gamma \vdash^{*} \sigma \in {\rm Type}$ can be decided in time $\mathcal{O}((\lvert a \rvert + \lvert \sigma \rvert)^{2})$ and $\mathcal{O}(\lvert \sigma \rvert^{2})$ respectively, for the following reason. One easily verifies that a procedure to decide $\Gamma \vdash^{*} \sigma \in {\rm Type}$ in time $\mathcal{O}(\lvert \sigma \rvert^{2})$ gives rise to a procedure to decide $\vdash \Gamma \in {\rm Ctxt}$ in time $\mathcal{O}(\lvert \Gamma \rvert^{2})$, by induction on $\lvert \Gamma \rvert$. Deciding $\Gamma \vdash a \in \sigma$ is then just a matter of deciding $\vdash \Gamma \, {\rm Ctxt}$, $\Gamma \vdash^{*} \sigma \in {\rm Type}$ and $\Gamma \vdash^{*} a \in \sigma$.
  
 We prove that $\Gamma \vdash^{*} a \in \sigma$ and $\Gamma \vdash^{*} \sigma \in {\rm Type}$ can be decided in time $k_{2} \lvert a \rvert^{2} + k_{1} \lvert \sigma \rvert$ and $k_{2} \lvert \sigma \rvert^{2}$ respectively, for sufficiently large constants $k_{1} < k_{2}$, by (simultaneous) induction on $\lvert a \rvert$ in the case of $\Gamma \vdash^{*} a \in \sigma$ and $\lvert \sigma \rvert$ in the case of $\Gamma \vdash^{*} \sigma \in {\rm Type}$. The base case, $\Gamma \vdash^{*} x \in \sigma$, can clearly be decided in $k_{2} + k_{1} \lvert \sigma \rvert$ time steps, by comparing the string $\sigma$ to the string appearing in $\Gamma$ at the position pointed to by $x$. We will go through the induction step for each rule of the $\Pi$-type and Id-type.

 We start with the rules for the $\Pi$-type.
  
 \emph{Formation.} Let the problem $\Gamma \vdash^{*} \Pi(A,[x]B)  \in {\rm Type}$ be given. We make two recursive calls to the algorithm to decide $\Gamma \vdash^{*} A \in {\rm Type}$ and $\Gamma, x \in A \vdash^{*} B \in {\rm Type}$, at the cost of $k_{2} (\lvert A \rvert^{2} + \lvert B \rvert^{2})$ time steps. This is clearly less than $k_{2} (\lvert A \rvert + \lvert B \rvert + 1)^{2}$, which is the number of time steps available to us.
  
 \emph{Introduction.} Let the problem $\Gamma \vdash^{*} \lambda(A, [x]B, [x]t) \in \sigma$ be given. We make the string comparison $\texttt{equal}(\Pi(A,[x]B),\sigma)$ at the cost of $k_1|\sigma|$ time steps. Moreover, we make a recursive call to the algorithm to decide $\Gamma, x \in A \vdash^{*} t \in B$, at the cost of $k_{2} \lvert t \rvert^{2} + k_{1} \lvert B \rvert$ additional time steps. As $k_1 < k_2$, we have clearly used less than $k_{2} (\lvert t \rvert + |A| + |B| + 1)^{2} + k_{1}\lvert \sigma \rvert$ time steps, which is the number of time steps available to us. There is no need to check the premises $\Gamma \vdash A \in {\rm Type}$ and $\Gamma, x \in A \vdash B \in {\rm Type}$. Since we are working under the assumption that $\Gamma \vdash \Pi(A,[x]B) \in {\rm Type}$ is derivable, it follows from unique derivability that $\Gamma \vdash A \in {\rm Type}$ and $\Gamma, x \in A \vdash B \in {\rm Type}$ are derivable.
  
 \emph{Elimination.} Let the problem $\Gamma \vdash^{*} {\bf app}(A,[x]B,f,a) \in \sigma$ be given. We make the string comparison $\texttt{equals}(B[a/x], \sigma)$ at the cost of $k_{1} \lvert \sigma \rvert$ time steps. Moreover, we make four recursive calls to the algorithm to decide $\Gamma \vdash^{*} A \in {\rm Type}$; $\Gamma, x \in A \vdash^{*} B \in {\rm Type}$; $\Gamma \vdash^{*} f \in \Pi(A,[x]B)$ and $\Gamma \vdash^{*} a \in A$, at the cost of $k_{2}( \lvert A \rvert^{2} + \lvert B \rvert^{2} + \lvert f \rvert^{2} + \lvert a \rvert^{2}) + k_{1}(2 \lvert A \rvert + \lvert B \rvert + 1)$. Since we may safely assume that $k_{2} \geq 2 k_{1}$, one easily verifies that we have used less than $k_{2}( \lvert A \rvert + \lvert B \rvert + \lvert f \rvert + \lvert a \rvert + 1)^{2} + k_{1} \lvert \sigma \rvert$ time steps in total, which is the number of time steps available to us.
  
 \emph{Computation.} Let the problem $\Gamma \vdash^{*} {\bf betaconv}(A,[x]B,a,[x]t) \in \sigma$ be given. We make the string comparison $\texttt{equals}({\bf app}(A,[x]B,\lambda(A,[x]B,[x]t),a) =_{B[a/x]} t[a/x], \sigma)$ at the cost of $k_{1} \lvert \sigma \rvert$ time steps. This is clearly less than $k_{2} (\lvert A \rvert + \lvert B \rvert + \lvert a \rvert + \lvert t \rvert +  1)^{2} + k_{1} \lvert \sigma \rvert$, which is the number of time steps available to us. There is no need to check the premises $\Gamma, x \in A \vdash t \in B$ and $\Gamma \vdash a \in A$. Since we are working under the assumption that $\Gamma \vdash \sigma \in {\rm Type}$ is derivable and we have verified that $\sigma \equiv {\bf app}(A,[x]B,\lambda(A,[x]B,[x]t),a) =_{B[a/x]} t[a/x]$, it follows from unique derivability that $\Gamma, x \in A \vdash t \in B$ and $\Gamma \vdash a \in A$ are derivable.

Finally, we check the rules for the Id-type.

 \emph{Formation.} Let the problem $\Gamma \vdash^{*} a =_{A} b  \in {\rm Type}$ be given. We make three recursive calls to the algorithm to decide $\Gamma \vdash^{*} A \in {\rm Type}$, $\Gamma \vdash^{*} a \in A$ and $\Gamma \vdash^{*} b \in A$, at the cost of $k_{2} (\lvert A \rvert^{2} + \lvert a \rvert^{2} + \lvert b \rvert^{2})+ 2 k_1 |A|$ time steps. Since we may assume that $k_2 \geq 2k_1$, this is clearly less than $k_{2} (\lvert A \rvert + \lvert a \rvert + \lvert b \rvert + 1)^{2}$, which is the number of time steps available to us.
  
 \emph{Introduction.} Let the problem $\Gamma \vdash^{*} {\bf refl}(A,a) \in \sigma$ be given. We make the string comparison $\texttt{equals}(a =_{A} a, \sigma)$ at the cost of $k_{1} \lvert \sigma \rvert$ time steps. Moreover, we make a recursive call to the algorithm to decide $\Gamma \vdash^{*} a \in A$ at the cost of $k_{2} \lvert a \rvert^{2} + k_{1} \lvert A \rvert$ time steps. Since $k_{1} \leq k_{2}$, it is clear that we have used less than $k_{2}( \lvert A \rvert + \lvert a \rvert + 1)^{2} + k_{1} \lvert \sigma \rvert$ time steps in total, which is the number of time steps available to us. There is no need to check $\Gamma \vdash A \in {\rm Type}$. Since we are working under the assumption that $\Gamma \vdash \sigma \in {\rm Type}$ is derivable and we have verified that $\sigma \equiv a =_{A} a$, it follows from unique derivability that $\Gamma \vdash A \in {\rm Type}$ is derivable.
  
 \emph{Elimination.} Let the problem $\Gamma \vdash^{*} {\bf idrec}(A, [x,y,u]P,a,b,p,[x]d)) \in \sigma$ be given. We make the string comparison $\texttt{equals}(P[a,b,p /x,y,u], \sigma)$ at the cost of $k_{1} \lvert \sigma \rvert$ time steps. Moreover, we make six recursive calls to the algorithm to decide $\Gamma \vdash^{*} A \in {\rm Type}$; $\Gamma, x \in A, y \in A, u \in x =_{A} y \vdash^{*} P \in {\rm Type}$; $\Gamma \vdash^{*} a \in A$; $\Gamma \vdash^{*} b \in A$; $\Gamma \vdash^{*} p \in a =_{A} b$ and $\Gamma, x \in A \vdash^{*} d \in P[x,x,{\bf refl}(A,x)/x,y,u]$. Using the estimate $\lvert P[x,x,{\bf refl}(A,x)/x,y,u] \rvert \leq \lvert P \rvert ( \lvert A \rvert + 2 )$, these calls cost at most $k_{2}( \lvert A \rvert^{2} + \lvert P \rvert^{2} + \lvert a \rvert^{2} + \lvert b \rvert^{2} + \lvert p \rvert^{2} + \lvert d \rvert^{2}) + k_{1}(3 \lvert A \rvert + \lvert a \rvert + \lvert b \rvert + \lvert P \rvert ( \lvert A \rvert + 2 ) + 1)$. Since we may safely assume that $k_{2} \geq 3 k_{1}$, one easily verifies that we have used less than $k_{2}( \lvert A \rvert + \lvert P \rvert + \lvert a \rvert + \lvert b \rvert + \lvert p \rvert + \lvert d \rvert + 1)^{2} + k_{1} \lvert \sigma \rvert$ time steps in total, which is the number of time steps available to us.
  
 \emph{Computation.} Let the problem $\Gamma \vdash^{*} {\bf idconv}(A, [x,y,u]P,a,[x]d) \in \sigma$ be given. We make the string comparison $\texttt{equals}({\bf idrec}(A,[x,y,u]P,a,a,{\bf refl}(A,a),[x]d) =_{P[a,a,{\bf refl}(A,a)/x,y,u]} d[a/x], \sigma)$ at the cost of $k_{1} \lvert \sigma \rvert$ time steps. This is clearly less than $k_{2} (\lvert A \rvert + \lvert P \rvert + \lvert a \rvert + \lvert d \rvert + 1)^{2} + k_{1} \lvert \sigma \rvert$, which is the number of time steps available to us. There is no need to check any of the premises. Since we are working under the assumption that $\Gamma \vdash \sigma \in {\rm Type}$ is derivable and we have verified that $\sigma \equiv {\bf idrec}(A,[x,y,u]P,a,a,{\bf refl}(A,a),[x]d) =_{P[a,a,{\bf refl}(A,a)/x,y,u]} d[a/x]$, it follows from unique derivability that all of the premises are derivable.
\end{proof}

\begin{remark} \label{othertypeconstructors}
  As we mentioned before, we think of objective type theory as an open framework which can be extended with additional rules like those for a sum and a natural numbers type. If we extended objective type theory with these type constructors, the proof that we gave above would still work; but we will not attempt to formulate and prove a general statement for general type constructors.
\end{remark}

\section{The classifying category}

The second main result of this paper will be that the classifying category associated to propositional type theory is a path category with weak homotopy $\Pi$-types in the sense of \cite{bergmoerdijk18}. In fact, if we assume function extensionality the homotopy $\Pi$-types will be strong in the sense of \cite{denbesten20}. We will prove this result in several steps. In this section we will outline the construction of the classifying category. This is included for the convenience of the reader, because the construction is absolutely standard (see, for instance, the account in \cite[Section 2.4]{hofmann97} which we will closely follow here). In the next we will prove that it is a path category; in the one after that we will construct weak and strong homotopy $\Pi$-types.

\subsection{Generalised substitution}  A \emph{context morphism} $f: \Delta \to \Gamma$ with
\[ \Gamma =  [ \, x_1 \in \sigma_1, x_2 \in  \sigma_2, \ldots, x_n \in \sigma_n \, ]. \]
is a sequence of terms $(t_1, \ldots, t_{n})$ for which the following statements are derivable:
\begin{displaymath}
\begin{array}{l}
\Delta \vdash t_1 \in \sigma_1 \\ \Delta \vdash t_2 \in \sigma_2[t_1/x_1]  \\ \ldots \\ \Delta \vdash t_n \in \sigma_n[t_1,\ldots,t_{n-1}/x_1,\ldots,x_{n-1}]
\end{array}
\end{displaymath}
We will regard two context morphisms $(s_1,\ldots,s_n), (t_1,\ldots,t_n): \Delta \to \Gamma$ as (syntactically) equal if they are componentwise syntactically equal.

Note that for a context $\Gamma$ as above we always have a context morphism $\Gamma \to \Gamma$:
\[ 1_\Gamma :\equiv (x_1, \ldots, x_n). \]

Context morphisms allow a notion of \emph{generalised substitution}. Indeed, if $\theta$ is a type, term or judgement in context $\Gamma$ and $f: \Delta \to \Gamma$ is a context morphism as above, then there is a new type, term or judgement
\[ \theta[f] :\equiv \theta[t_1,\ldots,t_n/x_1,\ldots,x_n] \]
in context $\Delta$. We clearly have $\theta[1_\Gamma] \equiv \theta$.

With this notion of generalised substitution, we can also regard the context morphisms as being inductively defined by the following rules:
\begin{center}
\begin{tabular}{cc}
\begin{tabular}{c}
$\Delta \, \mbox{Ctxt}$ \\
\hline
$!_\Delta: \Delta \to []$
\end{tabular}
&
\begin{tabular}{c}
$f: \Delta \to \Gamma \qquad \Gamma \vdash \sigma \, \in \mbox{Type} \qquad \Delta \vdash t : \sigma[f]$ \\
\hline
$(f, t): \Delta \to [\, \Gamma, x : \sigma \, ]$
\end{tabular}
\end{tabular}
\end{center}
We also have the following generalising substitution rule:
\begin{proposition}\label{gensubstrule}
If $f: \Delta \to \Gamma$ is a context morphism and $\Gamma, \Theta \vdash \mathcal{J}$, then also $\Delta, \Theta[f] \vdash \mathcal{J}[f]$.
\end{proposition}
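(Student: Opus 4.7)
The plan is to proceed by induction on the structure of the context morphism $f: \Delta \to \Gamma$ as given by the formation rules for context morphisms, which is equivalent to induction on the length of $\Gamma$.

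In the base case $f = !_\Delta: \Delta \to []$, the hypothesis is $\Theta \vdash \mathcal{J}$ and (because the empty substitution is the identity on syntax) the goal is $\Delta, \Theta \vdash \mathcal{J}$. This follows from iterated application of the admissible weakening rule of Lemma~\ref{admissabilityofweakeningandsubst}, inserting the variables of $\Delta$ one by one between the empty prefix and $\Theta$; each insertion is permitted since the types appearing further down in $\Delta$ do not mention the later variables.

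For the inductive step, write $f = (g, t)$ with $g: \Delta \to \Gamma'$, $\Gamma = [\Gamma', x:\sigma]$ and $\Delta \vdash t : \sigma[g]$. Given a derivable judgement $\Gamma', x:\sigma, \Theta \vdash \mathcal{J}$, apply the induction hypothesis to $g$, treating $[x:\sigma, \Theta]$ as the trailing telescope to be transported. This yields
\[ \Delta, x:\sigma[g], \Theta[g] \vdash \mathcal{J}[g]. \]
Now the term $\Delta \vdash t : \sigma[g]$ lets us invoke the substitution half of Lemma~\ref{admissabilityofweakeningandsubst}, producing
\[ \Delta, \Theta[g][t/x] \vdash \mathcal{J}[g][t/x]. \]
It remains to identify this judgement with $\Delta, \Theta[f] \vdash \mathcal{J}[f]$.

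The main (and really only) obstacle is this last syntactic identification: we need the two-stage substitution $[g][t/x]$ to coincide with the single simultaneous substitution $[(g,t)] = [f]$. This is where one uses the fact that the components of $g$ are terms in $\Delta$, so by the variable conventions of Remark~\ref{somesyntacticconventions} the fresh bound variable $x$ does not occur free in any of them; a routine induction on the structure of types and terms then shows that the composite substitution agrees with the simultaneous one. Once this bookkeeping lemma is in hand, the inductive step closes and the proposition follows.
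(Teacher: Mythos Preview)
Your proof is correct and follows exactly the approach sketched in the paper: induction on the structure of $f$, with the base case handled by weakening and the inductive step by the induction hypothesis followed by the single-variable substitution rule from Lemma~\ref{admissabilityofweakeningandsubst}. Your additional remark on why $[g][t/x]$ coincides with $[(g,t)]$ is a sensible elaboration of bookkeeping that the paper (and Hofmann's Proposition~2.12) leaves implicit.
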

\begin{proof} This is Proposition 2.12 in \cite{hofmann97}. By induction on the structure of $f$. The base case follows from weakening and in the induction step we use the induction hypothesis and the substitution rule.
\end{proof}

\subsection{Classifying category} If $g: \Theta \to \Delta$ and $f: \Delta \to \Gamma$ with $f \equiv (t_0,\ldots,t_n)$, then we can form a new list of terms:
\[ f \circ g :\equiv (t_0[g], \ldots, t_n[g]). \]
Then $f \circ g$ is a context morphism $\Theta \to \Gamma$. In fact, we have:
\begin{proposition}\label{generalitiesaboutgensubst}
If $h: E \to \Theta, g: \Theta \to \Delta, f: \Delta \to \Gamma$, and $\Gamma \vdash \sigma \in \mbox{Type}$ and $\Gamma \vdash a \in \sigma$, then the following equations hold up to syntactic equality:
\begin{eqnarray*}
1_\Gamma \circ f & \equiv & f \equiv f \circ 1_\Delta \\
f \circ (g \circ h) & \equiv & (f \circ g) \circ h \\
\sigma[f \circ g] & \equiv & \sigma[f][g] \\
a[f \circ g] & \equiv & a[f][g]
\end{eqnarray*}
\end{proposition}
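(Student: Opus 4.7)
The plan is to isolate a raw-syntactic substitution lemma that accounts for all the combinatorics, and then read off the four stated equations as corollaries. The key lemma I would prove is: for any raw type or term $\theta$, any pairwise distinct variables $x_1,\ldots,x_n$, any terms $t_1,\ldots,t_n$, and any context morphism $g: \Theta \to \Delta$,
\[ \theta[t_1,\ldots,t_n/x_1,\ldots,x_n][g] \equiv \theta[t_1[g],\ldots,t_n[g]/x_1,\ldots,x_n]. \]
This is exactly the content of the third and fourth equations once we unfold the definition $\theta[f] :\equiv \theta[t_1,\ldots,t_n/x_1,\ldots,x_n]$ and $f \circ g :\equiv (t_1[g],\ldots,t_n[g])$ with $f \equiv (t_1,\ldots,t_n)$.

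I would prove this substitution lemma by structural induction on $\theta$. In the variable base case, either $\theta \equiv x_i$ for some $i$, in which case both sides reduce to $t_i[g]$, or $\theta$ is a variable declared in $\Delta$ but not among the $x_i$, in which case the outer substitution acts as the identity and both sides reduce to $\theta[g]$. For the inductive step, every type and term former has the shape $C(\theta_1,\ldots,\theta_k)$, possibly with some arguments of the form $[y_1,\ldots,y_m]\theta_j$, and substitution commutes with $C$ and passes uniformly under the binders; one then applies the induction hypothesis to each $\theta_j$. Here the De Bruijn convention from Remark~\ref{somesyntacticconventions} is essential, since it lets us treat binders and capture-avoiding substitution as purely combinatorial operations on indices, leaving strict syntactic equality intact with no $\alpha$-renaming to worry about.

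Once the third and fourth equations are established, the first two equations follow by short computations. Associativity is the fourth equation applied componentwise to $f \equiv (t_1,\ldots,t_n)$:
\[ (f \circ g) \circ h \equiv (t_1[g][h],\ldots,t_n[g][h]) \equiv (t_1[g \circ h],\ldots,t_n[g \circ h]) \equiv f \circ (g \circ h). \]
For the identity laws, $f \circ 1_\Delta \equiv (t_1[1_\Delta],\ldots,t_n[1_\Delta]) \equiv (t_1,\ldots,t_n) \equiv f$ using the fact, already noted in the excerpt, that $\theta[1_\Delta] \equiv \theta$; and $1_\Gamma \circ f \equiv (x_1[f],\ldots,x_n[f]) \equiv (t_1,\ldots,t_n) \equiv f$ directly from how substitution acts on a variable.

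The only real obstacle is the substitution lemma itself, and within it the bookkeeping for how indices shift as substitution passes under a binder. Under De Bruijn this is routine but off-by-one errors are a genuine risk; doing it carefully once suffices, after which all four equations of the proposition fall out uniformly.
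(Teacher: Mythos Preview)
Your approach is correct and is the standard way these facts are established. The paper itself does not give a proof: it simply cites \cite[Proposition 2.13]{hofmann97}. Your sketch therefore supplies strictly more detail than the paper, and the strategy you outline---prove a raw substitution lemma by structural induction on $\theta$, then read off the four equations as corollaries---is precisely what one finds in Hofmann's account.

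One small imprecision worth flagging: in the variable base case you write that $\theta$ might be ``a variable declared in $\Delta$ but not among the $x_i$''. But $\theta$ lives in context $\Gamma$, whose variables are exactly $x_1,\ldots,x_n$, so no free variable of $\theta$ comes from $\Delta$. The extra case that genuinely arises during the induction is when $\theta$ is a variable bound by a binder you have already passed under; such a variable is untouched by both the inner parallel substitution and by $[g]$, so both sides reduce to $\theta$ itself, not to $\theta[g]$. This slip is cosmetic and does not affect the argument, and you have already correctly identified the binder-shifting bookkeeping as the only place where real care is needed.
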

\begin{proof}
See \cite[Proposition 2.13]{hofmann97}.
\end{proof}

The first two items of this proposition tell us that we have indeed defined a category. Within this category, two classes of morphisms will become important. 

\begin{definition}\label{fibrationsanddisplaymaps}
  {\rm If
  \[ \Gamma \vdash \sigma \in \mbox{Type}, \]
  then there is a map of the form $[\Gamma, x \in \sigma] \to \Gamma$ dropping the last type from the context (more precisely, if $\Gamma = [x_1 \in \sigma_1,\ldots,x_{n-1} \in \sigma_{n-1}]$, then it is the sequence $(x_1,\ldots,x_{n-1})$). We will call maps of this form \emph{display maps}. Note that a section of this display map corresponds to a term $a$ of type $\sigma$ in context $\Gamma$. By closing the class of display maps under isomorphisms and composition, we obtain the class of \emph{fibrations}.}
\end{definition}

An important property of display maps and fibrations is that they are both stable under pullback. The reason is that if $\Gamma \vdash a \in \sigma$ and $f: \Delta \to \Gamma$ is a context morphism, then
\[
\begin{tikzcd}
{[\Delta, y \in \sigma[f] ]} \arrow[r, "{[f,y]}"] \arrow[d] & {[\Gamma, x \in \sigma]} \arrow[d] \\
\Delta \arrow[r, swap, "f"] & \Gamma
\end{tikzcd}
\]
is a pullback (see, for instance, \cite[Proposition 3.9]{hofmann97}).

\section{Classifying category is a path category}

Having constructed the classifying category, we will now show that it has the structure of a path category (see \cite{bergmoerdijk18} for the definition) whose fibrations are precisely those maps which we have baptised fibrations in \ref{fibrationsanddisplaymaps}. To make the proof less cumbersome, we will often write $x =_A y$ (or $x = y$) to mean that the type $x =_A y$ is inhabited (as in the HoTT book, for instance).

\begin{lemma}\label{basicpropofeq}
  Propositional equality is a congruence.
\end{lemma}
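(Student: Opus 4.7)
The plan is to establish reflexivity, symmetry, transitivity, and preservation under all term- and type-formers using only the introduction and elimination rules for Id. The crucial observation is that the standard HoTT constructions of these properties never actually invoke the Id-computation rule — they use ${\bf idrec}$ only in the Paulin--Mohring pattern, with a carefully chosen motive — so the fact that our computation rule is merely propositional causes no trouble.

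Reflexivity is immediate from ${\bf refl}(A, a)$. For symmetry, given $p : a =_A b$, I would apply ${\bf idrec}$ with motive $P(x, y, u) :\equiv y =_A x$ and base $[x]{\bf refl}(A, x)$ to produce an element of $b =_A a$. For transitivity, the usual trick works: take motive $P(x, y, u) :\equiv \Pi(z : A)(y =_A z \to x =_A z)$ with base $[x]\lambda z. \lambda w. w$, apply ${\bf idrec}$ to the first equality, then specialise at the endpoint and apply to the second. For congruence under a function $f : A \to B$, take motive $f(x) =_B f(y)$ with base $[x]{\bf refl}(B, f(x))$; and for transport along a dependent type $\Gamma, x \in A \vdash P \in {\rm Type}$, take motive $P(x) \to P(y)$ with base $[x]\lambda u. u$.

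From these primitive congruences one obtains, by iteration, congruence for every term- and type-former in the syntax: substitute along each argument in turn, using transport to repair the type after each step. The clean formulation I would aim at is that if $f, g : \Delta \to \Gamma$ are context morphisms that are componentwise propositionally equal, then for any judgement $\Gamma \vdash \mathcal{J}$ there is a propositional equality between $\mathcal{J}[f]$ and $\mathcal{J}[g]$, proved by induction on the length of $\Gamma$ together with the structure of $\mathcal{J}$, invoking \cref{gensubstrule} to move substitutions around.

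The main obstacle is bookkeeping rather than anything conceptual: for context morphisms the successive components live in \emph{substituted} types, so the very notion of componentwise propositional equality is inherently dependent — the equality of the $i$-th component must be formulated over the equalities of the earlier ones, with transport used to mediate between the two candidate types in which the two sides would otherwise fail to be comparable. None of this requires the Id-computation rule to hold definitionally, which is exactly what makes the result viable in objective type theory.
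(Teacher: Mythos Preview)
Your proposal is correct and follows essentially the same idea as the paper: both rely on the observation that the standard constructions use only ${\bf idrec}$ with suitable motives and never the Id-computation rule, so the merely-propositional computation rule in objective type theory is no obstacle. The organisational difference is minor: the paper first proves the Leibniz principle (your ``transport'') via the motive $\Pi(B,[z]B[y/x])$ and then \emph{derives} symmetry and transitivity from it, whereas you construct symmetry and transitivity directly by separate ${\bf idrec}$ calls. Your further discussion of componentwise propositional equality of context morphisms and congruence of generalised substitution goes well beyond what the paper proves (or needs) in this lemma; the paper's argument stops at reflexivity, Leibniz, symmetry, and transitivity.
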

\begin{proof}
  The argument is absolutely standard (see, for instance, \cite[Chapter 2]{UFP13}). We only need to verify that the standard proof can be performed in our objective type theory. 

  First of all, we have refl to witness that equality is reflexive.

  Secondly, suppose $x \in A \vdash B \in \mbox{Type}$ and $\vdash p \in a =_A b$. Then from $B[a/x]$ being inhabited we wish to deduce that $B[b/x]$ is inhabited as well (the Leibniz principle). Write $Q :\equiv \Pi(B,[z]B[y/x])$ and suppose that $t \in B[a/x]$. From
  \[ x \in A, y \in A, u \in x=_A y \vdash Q \in \mbox{Type} \]
  and
  \[ x \in A \vdash \lambda(B,[z]B,[z]z) \in Q[x,x,{\bf refl}(A,x)/x,y,u] \]
  it follows that
  \[ \Pi(B[a/x],[z]B[b/x]) \]
  is inhabited, by $s$ say. So if $t \in B[a/x]$, then
  \[ {\bf app}(B[a/x],[z]B[b/x], s, t) \in B[b/x], \]
  as desired.

  From this symmetry and transitivity follow as well. Indeed, choose $P :\equiv x =_A a$. Since ${\bf refl}(A,a) \in P[a/x]$, any $p \in a=_A b$ gives that $P[b/x] \equiv b =_A a$ is inhabited as well.

  Furthermore, if $p \in a=_A b$ and $q \in b=_A c$, consider $P :\equiv x =_A c$. Since $P[b/x]$ is inhabited, so must be $P[a/x]$, using symmetry and the Leibniz principle.
\end{proof}

\begin{lemma}\label{table4admissable}
  One can define appropriate terms and types satisfying the rules for contextual $\Pi$-types (the rules in Table \ref{table4} in the appendix).
\end{lemma}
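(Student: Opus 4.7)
The plan is to proceed by induction on the length of the contextual parameter $\Delta$. When $\Delta$ is empty, the contextual rules coincide with the basic $\Pi$-rules stated in the body of the paper, so there is nothing to do. In the inductive step, writing $\Delta \equiv [\Delta', y \in C]$, I would define the contextual $\Pi$-type, its $\lambda$-abstraction, application, and $\beta$-conversion term by nesting one additional basic $\Pi$, $\lambda$, ${\bf app}$, and ${\bf betaconv}$ for the variable $y$ inside the corresponding constructions for $\Delta'$, and applying the inductive hypothesis. Admissibility of weakening and substitution (\cref{admissabilityofweakeningandsubst}) together with \cref{gensubstrule} ensures that all the relevant premises can be obtained in the extended context.

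The formation, introduction, and elimination rules then drop out of the basic rules by straightforward iteration: the formation reduces to repeated applications of the basic $\Pi$-formation, the introduction to nested $\lambda$-abstractions, and the elimination to the iterated basic application applied to the sequence of arguments furnished by the contextual parameter. The non-trivial step is the computation rule. Applying an iterated $\lambda$-abstraction to a sequence of arguments does not yield the fully substituted term on the nose, only up to a chain of propositional equalities obtained from the basic $\beta$-conversion at each level of the iteration. To assemble these into the single propositional equality demanded by the contextual computation rule, I would invoke \cref{basicpropofeq}: transitivity of propositional equality lets us chain the individual $\beta$-steps, while the Leibniz principle lets us transport each such step through the intermediate ${\bf app}$'s needed to reach the redex at the innermost level.

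The main obstacle I expect is bookkeeping rather than anything conceptual: at each step of the induction one has to keep track of the indexed family of types appearing as variables are peeled off from $\Delta$, and name enough auxiliary terms so that congruence can be invoked at the right places. Because our $\beta$-rule is only propositional, every step that would be ``free'' in the definitional theory must here be mediated by an explicit transport, but \cref{basicpropofeq} supplies all the primitives (congruence, symmetry, transitivity, Leibniz) needed to carry out these transports uniformly, so the induction goes through.
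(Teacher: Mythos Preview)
Your approach is essentially identical to the paper's: induction on the length of $\Delta$, currying off one variable at a time, and using \cref{basicpropofeq} (Leibniz plus transitivity) to chain the propositional $\beta$-steps in the computation rule. One small slip: when $\Delta$ is empty the contextual rules do \emph{not} reduce to the basic $\Pi$-rules (there is no domain type at all); the paper handles this case by setting $\Pi([],[]B):\equiv B$, $\lambda([],[]B,[]t):\equiv t$, ${\bf app}([],[]B,f,[]):\equiv f$, so that the computation rule holds on the nose and can be witnessed by ${\bf refl}$.
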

\begin{proof}
  By induction on the length of the context $\Delta$.

  If $\Delta = []$, then we can put \[ \Pi([],[]B) :\equiv B, \]
  and
  \[ \lambda([],[]B,[]t) :\equiv t, \]
  while
  \[ {\bf app}([],[]B,f,[]) :\equiv f. \]
  Then
  \[ {\bf app}([],[]B,\lambda([],[]B,[]t),[]) \equiv \lambda([]t) \equiv t \equiv t[[]/[]], \]
  so we can witness {\bf betaconv} by {\bf refl}.

  If the statement is true for $\Delta$, then we can prove it for $ \Delta.A = [\Delta, y \in A]$, as follows. We put \[ \Pi(\Delta.A, [\vec{x}]B) :\equiv \Pi(\Delta,[\vec{y}]\Pi(A,[y]B)), \]
  and
  \[ \lambda(\Delta.A,[\vec{x}]B,[ \vec{x}]t) :\equiv \lambda(\Delta,[\vec{y}]\Pi(A,[y]B),[\vec{y}]\lambda(A,[y]B,[y]t)), \]
  while
  \[ {\bf app}(\Delta.A,[\vec{x}]B,f,(\vec{a},a)) :\equiv {\bf app}(A,[y ]B,{\bf app}(\Delta,[\vec{y}]\Pi(A,[y]B),f,\vec{a}),a). \]
  Then we have:
  \begin{eqnarray*}
    {\bf app}(\Delta.A,[\vec{x}]B,\lambda(\Delta.A,[\vec{x}]B,[\vec{x}]t),(\vec{a},a)) & \equiv \\
    {\bf app}(A,[y]B,{\bf app}(\Delta,[\vec{y}]\Pi(A,[y]B),\lambda(\Delta,[\vec{y}]\Pi(A,[y]B),[\vec{y}]\lambda(A,[y]B,[y]t)),\vec{a}),a) & = \\
    {\bf app}(A,[y]B,\lambda(A,[y]B, [y]t)[\vec{a}/\vec{y}],a) & = \\
    t[(\vec{a},a)/\vec{x}],
  \end{eqnarray*}
  using both the induction hypothesis and the previous lemma.
\end{proof}

\begin{lemma}\label{table5admissable}
  One can define appropriate terms and types satisfying the rules for contextual identity types (that is, the rules in Table \ref{table5} in the appendix).
\end{lemma}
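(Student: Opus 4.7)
The plan is to follow the same pattern as Lemma \ref{table4admissable}, proceeding by induction on the length of the contextual parameter $\Delta$ appearing in the strengthened identity-type rules (Table \ref{table5}), and leveraging the contextual $\Pi$-types just established to internalise additional context variables.

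For the base case $\Delta = []$, the strengthened rules degenerate into the ordinary identity-type rules, so I would take the witnesses to be the basic constructors ${\bf refl}$, ${\bf idrec}$ and ${\bf idconv}$ directly from the syntax.

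For the inductive step $\Delta.B = [\Delta, v \in B]$, the idea is to internalise the final variable $v$ using a $\Pi$-type. Given a motive $x \in A, y \in A, u \in x =_A y, \Delta, v \in B \vdash P \in \mbox{Type}$, I form $P' :\equiv \Pi(B,[v]P)$ in the shorter context $x \in A, y \in A, u \in x =_A y, \Delta$, and curry the given base term $d$ into a term $d' \in P'[\ldots]$ using the contextual $\lambda$ from Lemma \ref{table4admissable}. Applying the induction hypothesis to $P'$ and $d'$ produces a contextual ${\bf idrec}$ term whose type is a $\Pi$-type, and uncurrying it via contextual ${\bf app}$ at the appropriate $v_0 \in B[\ldots]$ yields the desired contextual ${\bf idrec}$ for $P$ over $\Delta.B$.

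For the computation rule, I would glue together three propositional equalities: the induction hypothesis' contextual ${\bf idconv}$ at $P'$ (which equates the built ${\bf idrec}$ at ${\bf refl}(A,a)$ with $d'$), the contextual ${\bf betaconv}$ from Lemma \ref{table4admissable} (which unpacks $d'$ applied to $v_0$ back into $d$), and the congruence of propositional equality established in Lemma \ref{basicpropofeq} (needed to propagate the first equality under ${\bf app}$ before chaining with the second). The main obstacle will be the bookkeeping needed to track substitutions cleanly through the currying/uncurrying and to verify that the composed propositional equality really lives in the identity type over the exact base type prescribed by the strengthened ${\bf idconv}$ rule; in particular, the base type itself mentions the constructed ${\bf idrec}$, so one has to invoke congruence to transport the induction hypothesis' equality into the correct fibre.
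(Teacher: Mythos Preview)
Your proposal is correct, but it takes a more laborious route than the paper. You proceed by induction on the length of $\Delta$, peeling off one variable at a time via an ordinary $\Pi$-type and invoking the inductive hypothesis. The paper instead absorbs the \emph{entire} context $\Delta$ in one step: it sets $Q :\equiv \Pi(\Delta,[\vec{z}]P)$ and $s :\equiv \lambda(\Delta,[\vec{z}]P,[\vec{z}]d)[x,x,{\bf refl}(A,x)/x,y,u]$ using the contextual $\Pi$-types already built in Lemma~\ref{table4admissable}, applies the \emph{basic} ${\bf idrec}$ to $Q$, and then uses the contextual ${\bf app}$ to extract a term of $P[a,b,p,\vec{q}/x,y,u,\vec{z}]$. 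The computation rule then follows directly from the ordinary ${\bf idconv}$ together with the contextual ${\bf betaconv}$ and congruence (Lemma~\ref{basicpropofeq}). In effect, the induction on $\Delta$ that you propose has already been carried out inside Lemma~\ref{table4admissable}, so re-running it here is redundant; the paper's approach is shorter and avoids the substitution bookkeeping you anticipate in your final paragraph. Your method would still succeed, but it duplicates work.
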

\begin{proof}
Suppose
\begin{center}
\begin{tabular}{c}
$\Gamma, x \in A, y \in A, u \in x=_A y, \Delta \, \vdash P \in \mbox{Type}$ \\
$\Gamma, x \in A, \vec{z} \in \Delta[x,x,{\bf refl}(A,x)/x,y,u] \, \vdash d \in P[x, x, {\bf refl}(A,x)/x,y,u]$ \\
$\Gamma \, \vdash p \in a=_A b$ \\
$\Gamma \, \vdash \vec{q} \in \Delta[a,b,p/x,y,u]$
\end{tabular}
\end{center}
Writing 
\begin{eqnarray*} Q & :\equiv & \Pi(\Delta, [\vec{z}]P) \quad \mbox{ and } \\ s & :\equiv & \lambda(\Delta,[\vec{z}]P, [\vec{z}] d)[x,x,{\bf refl}(A,x)/x,y,u],
\end{eqnarray*}
 we have:
\begin{center}
\begin{tabular}{c}
$\Gamma, x \in A, y \in A, u \in x=_A y \, \vdash Q \in \mbox{Type}$ \\
$\Gamma, x \in A \, \vdash s \in Q[x, x, {\bf refl}(A,x)/x,y,u]$ \\
$\Gamma \, \vdash p \in a=_A b$
\end{tabular}
\end{center}
so we have a term
\[ {\bf idrec}(A,[x,y,u]Q,a,b,p,[x]s) \]
in $Q[a,b,p/x,y,u]$ and therefore a term
\[ {\bf app}(\Delta,[\vec{z}]P,{\bf idrec}(A,[x,y,u]Q,a,b,p,[x]s),\vec{q})[a,b,p/x,y,u] \]
in $P[a,b,p/x,y,u]$, as desired.

To complete the proof, we assume $a \in A$ and $\vec{w} \in \Delta[a,a,{\bf refl}(A, a)/x,y,u]$ and we have to calculate:
\begin{eqnarray*}
  {\bf app}(\Delta,[\vec{z}]P,{\bf idrec}(A,[x,y,u]Q,a,a,{\bf refl}(A,x),[x]s),\vec{w})[a,a,{\bf refl}(A,a)/x,y,u] & = \\
  {\bf app}(\Delta,[\vec{z}]P,\lambda(\Delta,[\vec{z}]P, [\vec{z}] d[a/x]),\vec{w})[a,a,{\bf refl}(A,a)/x,y,u] & = \\
  d[a,\vec{w}/x,\vec{z}].
\end{eqnarray*}
\end{proof}

\begin{theorem}\label{syntacticcatpathcat}
  The syntactic category associated to our type theory is a path category.
\end{theorem}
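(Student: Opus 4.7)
The plan is to verify each of the axioms of a path category in the sense of \cite{bergmoerdijk18} in turn. I would take the fibrations to be exactly those defined in \ref{fibrationsanddisplaymaps} and define the weak equivalences to be the homotopy equivalences, where a homotopy between parallel morphisms $f,g : \Delta \to \Gamma$ is a context morphism $\Delta \to P\Gamma$ witnessing pointwise propositional equality (i.e., a sequence of inhabitants of the appropriate iterated identity types). The terminal object is the empty context. Fibrations are closed under composition and isomorphism by definition, and the pullback square in Section 4 shows they are stable under pullback, so the basic fibration-category axioms are immediate.

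The real content lies in the construction of path objects and the verification of 2-out-of-3 for weak equivalences. For a context $\Gamma$ together with a type $A$, a relative path object for the display map $[\Gamma, x \in A] \to \Gamma$ is obtained from the identity type: the context $[\Gamma, x \in A, y \in A, p \in x =_A y]$ fibres over $[\Gamma, x \in A, y \in A]$ as a display map, and the reflexivity term supplies a section. To show that this section is a weak equivalence one builds a homotopy inverse by identity elimination, and for this the contextual elimination rule from Lemma \ref{table5admissable} is essential, since the elimination must be performed relative to the variables in $\Gamma$. The 2-out-of-3 property for homotopy equivalences then follows in the standard way from the fact that propositional equality is a congruence with respect to substitution and is closed under transitivity (Lemma \ref{basicpropofeq}). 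Closure of acyclic fibrations under pullback, and the existence of a homotopy section for every acyclic fibration, can be handled along the same lines, using contextual $\Pi$-types (Lemma \ref{table4admissable}) to produce sections over arbitrary contexts.

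The main obstacle, and the precise reason Lemmas \ref{table4admissable} and \ref{table5admissable} were proved in advance, is the absence of a definitional $\beta$-rule. In a conventional type theory one silently identifies ${\bf idrec}(A,[x,y,u]P,a,a,{\bf refl}(A,a),[x]d)$ with $d[a/x]$, and standard constructions of homotopy inverses exploit that identification. Here the identification holds only propositionally, witnessed by ${\bf idconv}$, so every such use must be replaced by an explicit transport along a propositional equation, via the Leibniz principle extracted in the proof of Lemma \ref{basicpropofeq}. These insertions are routine but make the argument noticeably longer, and they are the reason the contextual forms of the rules are indispensable: the conversions one needs typically happen inside contexts containing extra bound variables $\vec{z} \in \Delta$, so the non-contextual elimination and application rules are simply not expressive enough to package them.
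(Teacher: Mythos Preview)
Your plan is sound in outline, but it takes a different route from the paper. The paper's proof is a one-liner: having established in Lemma~\ref{table5admissable} that the \emph{contextual} identity-type rules of Table~\ref{table5} are derivable, it simply invokes the main result of \cite{vandenberg18}, which says precisely that any type theory admitting these contextual (Paulin--Mohring style) identity rules has a classifying category that is a path category. All of the verifications you propose to carry out by hand---construction of path objects, factorisation, lifting against acyclic cofibrations, sections of acyclic fibrations, the 2-out-of-$n$ property---are already packaged inside that reference.

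Your direct verification is not incorrect, and it has the virtue of being self-contained; it is essentially what \cite{vandenberg18} does in the abstract. Two small points are worth flagging. First, the axiomatisation in \cite{bergmoerdijk18} asks for 2-out-of-6 rather than 2-out-of-3; for homotopy equivalences this stronger property also holds, but you would need to say so. Second, you invoke contextual $\Pi$-types (Lemma~\ref{table4admissable}) to produce sections of acyclic fibrations, but this is more than is needed: the path-category structure follows from the contextual identity-elimination rule alone. In the paper's organisation, Lemma~\ref{table4admissable} serves only as an auxiliary device in the proof of Lemma~\ref{table5admissable}; once the contextual identity rules are in hand, $\Pi$-types play no further role in Theorem~\ref{syntacticcatpathcat} and re-enter only in Section~6.
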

\begin{proof}
  This follows from the previous lemma and the main result of \cite{vandenberg18}.
\end{proof}

\section{Classifying category has homotopy $\Pi$-types}

We will now prove that the classifying category is a path category with homotopy $\Pi$-types (see \cite{denbesten20} for the definition). We will give a categorical proof heavily exploiting the results from \cite{denbesten20}.

\begin{definition}\label{classofdisplaywithpi}
  {\rm We say that a class of fibrations $\mathcal{D}$ in a path category is a \emph{class of display maps with weak (strong) homotopy $\Pi$-types} if the following conditions are satisfied:
  \begin{itemize}
  \item{Every identity map lies in $\mathcal{D}$.}
  \item{The pullback of a map in $\mathcal{D}$ along any other map can be found in $\mathcal{D}$.}
  \item{For every composable $d, e \in \mathcal{D}$ the weak (strong) homotopy $\Pi$-type $\Pi_{e} (d)$ exists and can be found in $\mathcal{D}$.}
  \end{itemize} }
\end{definition}
  
\begin{lemma}\label{pidisplaysyntacticcat}
  The display maps in the classifying category form a class of display maps weak homotopy $\Pi$-types, which will be strong as soon as function extensionality holds.
\end{lemma}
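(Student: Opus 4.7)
My plan is to check, one by one, the three conditions of Definition \ref{classofdisplaywithpi} for the class of display maps in the classifying category. Conditions (1) and (2) are essentially bookkeeping: the identity $1_\Gamma$ can be treated as a display map up to the canonical isomorphisms of context extensions (or, equivalently, after closing under the isomorphisms implicit in the definition of fibration from \ref{fibrationsanddisplaymaps}), and stability under pullback is exactly the content of the pullback square displayed at the end of Section 4.

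The substantive step is condition (3). Given composable display maps
\[ e: [\Gamma, x \in A] \to \Gamma, \qquad d: [\Gamma, x \in A, y \in B] \to [\Gamma, x \in A], \]
the natural candidate for $\Pi_e(d)$ is the display map $[\Gamma, f \in \Pi(A,[x]B)] \to \Gamma$. The data needed to exhibit it as a weak homotopy $\Pi$-type are the application ${\bf app}(A,[x]B,-,-)$ (playing the role of evaluation), the abstraction $\lambda(A,[x]B,[x]{-})$, and the term ${\bf betaconv}$ witnessing that evaluation after abstraction is propositionally equal to the identity. The categorical universal property from \cite{denbesten20} is phrased in terms of a weak equivalence between sections of the candidate and sections of $d \circ e$; the map in one direction is induced by $\lambda$, the other by ${\bf app}$, and the homotopy in one direction is supplied by ${\bf betaconv}$, with Lemma \ref{basicpropofeq} used to propagate propositional equalities under the relevant substitutions.

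The part I expect to be most delicate is the reduction from this single-variable description to the fully general formulation of \cite{denbesten20}, which allows $d$ and $e$ to be iterated display maps corresponding to telescopes, and which demands the universal property with respect to arbitrary parameter contexts. This is where the contextual $\Pi$-type constructor of Lemma \ref{table4admissable} and the contextual identity eliminator of Lemma \ref{table5admissable} come into their own: they let us curry entire context extensions into a single $\Pi$-type, and they let us state and discharge the coherence conditions as propositional equalities over those telescopes. Once the contextual rules are in place, the verification of the universal property amounts to the routine translation between type-theoretic rules and categorical data already carried out in \cite{denbesten20}.

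Finally, for the strong version under function extensionality, the additional datum required by \cite{denbesten20} is a propositional $\eta$-rule for $\Pi$. This is a standard consequence of function extensionality (apply extensionality to the pair $f$ and $\lambda(A,[x]B,[x]{\bf app}(A,[x]B,f,x))$, using ${\bf betaconv}$ pointwise to produce the required fibrewise equality). Once this $\eta$-rule is available, the evaluation-abstraction pair becomes a propositional equivalence rather than merely a retraction, which promotes the weak homotopy $\Pi$-type to a strong one in the sense of \cite{denbesten20}.
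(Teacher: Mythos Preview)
Your proposal is correct and matches the paper's own approach, which the paper summarises in a single sentence as ``a standard exercise in translating type-theoretic definitions in categorical terms.'' Your identification of the candidate $\Pi_e(d)$ and of the roles played by ${\bf app}$, $\lambda$, ${\bf betaconv}$, and function extensionality are exactly right.

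One point of over-engineering: the paragraph in which you worry about ``the fully general formulation \ldots\ which allows $d$ and $e$ to be iterated display maps corresponding to telescopes'' is not needed for this lemma. Definition~\ref{classofdisplaywithpi} only requires $\Pi_e(d)$ for $d, e \in \mathcal{D}$, i.e., for \emph{single} display maps; the extension to compositions is precisely what Lemmas~\ref{horizontalpi} and~\ref{verticalpi} accomplish afterwards, and they do so by purely categorical means (citing \cite{denbesten20}) rather than by returning to the contextual type-theoretic rules. Likewise, the universal property with respect to an arbitrary parameter object $Y \to \Gamma$ does not require the contextual $\Pi$-rules: pulling back along $Y \to \Gamma$ reduces it to the ordinary $\Pi$-rules applied in context $Y$. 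So your appeal to Lemmas~\ref{table4admissable} and~\ref{table5admissable} is harmless but superfluous here.
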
  
\begin{proof}
This is a standard exercise in translating type-theoretic definitions in categorical terms. Note that by function extensionality we mean Axiom 2.9.3 from the HoTT book \cite{UFP13}.
\end{proof}

  \begin{lemma}\label{addiso}
  Let $\mathcal{D}$ be a class of display maps with weak (strong) homotopy $\Pi$-types in a path category $\mathcal{C}$. Write $\mathcal{I}$ for the class of isomorphisms in $\mathcal{C}$. Then $\mathcal{D} \cup \mathcal{I}$ is a class of display maps with weak (strong) homotopy $\Pi$-types as well.
  \end{lemma}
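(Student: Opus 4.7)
The plan is to verify each of the three bullet points of \ref{classofdisplaywithpi} for the enlarged class $\mathcal{D} \cup \mathcal{I}$. First note that both $\mathcal{D}$ and $\mathcal{I}$ consist of fibrations (every isomorphism is an acyclic fibration in any path category), so the union does too. The first condition, that every identity lies in the class, is immediate because identities are isomorphisms. For the second condition, pullback stability, I would argue by a case distinction on the map being pulled back: if it lies in $\mathcal{D}$, its pullback along any morphism can still be chosen in $\mathcal{D}$ by hypothesis; if it is an isomorphism then the pullback trivially exists and is again an isomorphism.

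The real content lies in the third condition. Let $d \colon Z \to Y$ and $e \colon Y \to X$ be composable in $\mathcal{D} \cup \mathcal{I}$. I would distinguish three sub-cases and propose a candidate for $\Pi_e(d)$ in each. If $d, e \in \mathcal{D}$, there is nothing to do. If $d \in \mathcal{I}$ (whatever $e$ is), I would take $\Pi_e(d) := 1_X$: for any $f \colon W \to X$ the pullback $f^{*}(e^{*} d)$ is again an isomorphism, hence has a fibrewise-homotopically unique section, and this section corresponds canonically to the unique section of $f^{*}(1_X) = 1_W$. If $d \in \mathcal{D}$ and $e \in \mathcal{I}$, I would take $\Pi_e(d) := (e^{-1})^{*} d$, which lies in $\mathcal{D}$ by pullback stability of $\mathcal{D}$; the idea being that $e^{*}$ and $(e^{-1})^{*}$ induce mutually inverse equivalences between the fibrations over $X$ and those over $Y$, and this transports the universal property of $d$ over $Y$ back to one over $X$.

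The main obstacle is checking that these constructions really satisfy the definition of weak (or, assuming function extensionality, strong) homotopy $\Pi$-type as formulated in \cite{denbesten20}: I need to exhibit an ${\bf app}$-map together with the appropriate $\lambda$-introduction and verify the up-to-homotopy universal property. In the $d \in \mathcal{I}$ case this follows from the homotopy-uniqueness of sections of an isomorphism; in the $e \in \mathcal{I}$ case it reduces to the standard fact that pullback along a homotopy equivalence induces an equivalence on the associated homotopy categories of fibrations, available in \cite{bergmoerdijk18}. Once this formal transport principle is in hand, the verification is routine but somewhat tedious bookkeeping, and I would not spell it out in detail.
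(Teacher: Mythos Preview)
Your proposal is correct and follows essentially the same case analysis as the paper: both take $\Pi_e(d) = 1_X$ when $d$ is an isomorphism and $\Pi_e(d) = (e^{-1})^{*} d$ when $e$ is an isomorphism, with the latter lying in $\mathcal{D}$ by pullback stability. The paper phrases the second case slightly differently (first observing that the composite $e d$ is a $\Pi$-type and then noting it is isomorphic to the pullback $(e^{-1})^{*} d \in \mathcal{D}$), and dispatches both verifications with ``not difficult'' and ``easy to see'' rather than the transport argument you sketch, but the content is the same.
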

  
  \begin{proof}
  Let $d : X \longrightarrow I$ be a map in $\mathcal{D}$ and let $i : I \longrightarrow J$ be a map in $\mathcal{I}$. We claim that the weak (strong) homotopy $\Pi$-type $\Pi_{d} (i)$ exists and can be found in $\mathcal{D}$. It is not difficult to verify that $id : X \longrightarrow J$ is a weak (strong) homotopy $\Pi$-type for $d$ and $i$. Now note that $id : X \longrightarrow J$ is the pullback of $d$ along $i^{-1}$, hence isomorphic to some $d' : X' \longrightarrow J$ in $\mathcal{D}$, which is therefore a weak (strong) homotopy $\Pi$-type for $d$ and $i$ as well. If on the other hand $i : X \longrightarrow I$ is a map in $\mathcal{I}$ and $f : I \longrightarrow J$ is an arbitrary map, then it is easy to see that $1 : J \longrightarrow J$ is a weak (strong) homotopy $\Pi$-type for $i$ and $f$. The lemma follows since isomorphisms are fibrations and are closed under pullbacks.
  \end{proof}
  
    \begin{lemma}\label{horizontalpi}
  Let $\mathcal{D}$ be a class of display maps with weak (strong) homotopy $\Pi$-types in a path category $\mathcal{C}$. Write $\overline{\mathcal{D}}$ for the closure of $\mathcal{D}$ under composition. Then for every composable $d \in \mathcal{D}$ and $f \in \overline{\mathcal{D}}$, the weak (strong) homotopy $\Pi$-type $\Pi_{f}(d)$ exists.
  \end{lemma}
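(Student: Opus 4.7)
The plan is to prove the lemma by induction on the length of a $\mathcal{D}$-factorisation of $f$. Because $f \in \overline{\mathcal{D}}$, we may write $f = f_n \circ f_{n-1} \circ \cdots \circ f_1$ with each $f_i \in \mathcal{D}$ and $n \geq 1$. The base case $n = 1$ is immediate: then $f$ itself lies in $\mathcal{D}$, and since $d \in \mathcal{D}$ is composable with $f$, the third clause of Definition \ref{classofdisplaywithpi} directly produces $\Pi_f(d)$.

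For the inductive step ($n > 1$), I would split off the innermost factor and write $f = g \circ f_1$, where $g = f_n \circ \cdots \circ f_2 \in \overline{\mathcal{D}}$ has a $\mathcal{D}$-factorisation of length $n - 1$. First I would apply the defining property of $\mathcal{D}$ to the composable pair $d, f_1 \in \mathcal{D}$ to produce a weak (strong) homotopy $\Pi$-type $\Pi_{f_1}(d)$ which moreover lies in $\mathcal{D}$. Now $g$ and $\Pi_{f_1}(d)$ are composable with $\Pi_{f_1}(d) \in \mathcal{D}$ and $g \in \overline{\mathcal{D}}$ of shorter length, so the induction hypothesis supplies $\Pi_g\bigl(\Pi_{f_1}(d)\bigr)$.

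The remaining ingredient is a Chevalley-style comparison: any weak (strong) homotopy $\Pi$-type for the pair $\bigl(g,\Pi_{f_1}(d)\bigr)$ also serves as a weak (strong) homotopy $\Pi$-type for the composite $(g \circ f_1, d)$. Intuitively this is just the fact that iterated right adjoints to pullback compose, but in our setting it has to be verified up to the appropriate notion of homotopy. I would take this associativity result (for weak homotopy $\Pi$-types, and for strong ones under function extensionality) from the general theory developed in \cite{denbesten20}. With it, $\Pi_g\bigl(\Pi_{f_1}(d)\bigr)$ serves as $\Pi_f(d)$, completing the induction.

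The only genuine obstacle is this last associativity step: one has to be careful that the universal property used at each stage (weak in the general case, strong when function extensionality is available) is the same one that propagates through the composite. Verifying this amounts to chasing the universal property through the pullback-and-section description of homotopy $\Pi$-types in a path category, which is exactly the kind of computation \cite{denbesten20} is set up to handle; the rest of the argument is purely bookkeeping on the length of the factorisation.
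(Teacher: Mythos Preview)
Your argument is correct and follows the same overall strategy as the paper: induct on the length of a $\mathcal{D}$-factorisation of $f$ and use the associativity of homotopy $\Pi$-types (Lemma~5.3 of \cite{denbesten20}) to reduce $\Pi_{f}(d)$ to an iterated $\Pi$ along the factors.

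The one organisational difference is which factor you peel off. You split off the \emph{innermost} map $f_1$, form $\Pi_{f_1}(d)\in\mathcal{D}$ from the axiom, and then apply the induction hypothesis to the shorter $g$. The paper instead splits off the \emph{outermost} map $e$, which forces it to strengthen the induction hypothesis to ``$\Pi_f(d)$ exists \emph{and lies in $\mathcal{D}$}'' so that the axiom applies to the pair $(\Pi_f(d),e)$. Your version avoids that strengthening, which is a small simplification; the paper's version, on the other hand, yields the extra information $\Pi_f(d)\in\mathcal{D}$ as a byproduct (though this is not needed downstream). One minor remark: the associativity result from \cite{denbesten20} holds for strong homotopy $\Pi$-types outright, so your parenthetical ``under function extensionality'' is unnecessary here.
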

  
  \begin{proof}
  Consider maps $d : X \longrightarrow I$, $f : I \longrightarrow J$ and $e : J \longrightarrow K$, with $d,e \in \mathcal{D}$ and $f \in \overline{\mathcal{D}}$. By induction, it suffices to show that the weak (strong) homotopy $\Pi$-type $\Pi_{ef}(d)$ exists and can be found in $\mathcal{D}$, whenever the weak (strong) homotopy $\Pi$-type $\Pi_{f}(d)$ exists and can be found in $\mathcal{D}$. Take the weak (strong) homotopy $\Pi$-type $\Pi_{e} \Pi_{f}(d)$ and note that it can be found in $\mathcal{D}$. By Lemma 5.3 of \cite{denbesten20}, $\Pi_{e} \Pi_{f}(d)$ is a weak (strong) homotopy $\Pi$-type for $d$ and $ef$.
  \end{proof}
  
    \begin{lemma}\label{verticalpi}
  Let $\mathcal{D}$ be a class of display maps with weak (strong) homotopy $\Pi$-types in a path category $\mathcal{C}$. Write $\overline{\mathcal{D}}$ for the closure of $\mathcal{D}$ under composition. Then for every composable $f, g \in \overline{\mathcal{D}}$, the weak (strong) homotopy $\Pi$-type $\Pi_{f}(g)$ exists.
  \end{lemma}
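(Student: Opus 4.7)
The plan is to induct on the length of $g$ viewed as a composition of maps in $\mathcal{D}$. If $g \in \mathcal{D}$ itself, the claim is exactly Lemma~\ref{horizontalpi}. For the inductive step, write $g = g' \circ d$ with $d : X \to W$ in $\mathcal{D}$ and $g' : W \to Y$ a composition of strictly fewer maps in $\mathcal{D}$. By the induction hypothesis, the weak (strong) homotopy $\Pi$-type $p : P := \Pi_f(g') \to Z$ exists in $\mathcal{C}$, together with a canonical evaluation map $\mathit{ev} : P \times_Z Y \to W$ satisfying $g' \circ \mathit{ev} = \pi_Y$.

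Form the pullback $\mathit{ev}^{*}d : \mathit{ev}^{*}X \to P \times_Z Y$; since $d \in \mathcal{D}$ and $\mathcal{D}$ is pullback-stable, so is $\mathit{ev}^{*}d$. The projection $\pi_P : P \times_Z Y \to P$ is a pullback of $f \in \overline{\mathcal{D}}$, and one checks that $\overline{\mathcal{D}}$ is closed under pullback by iterating the pullback stability of $\mathcal{D}$ along the individual factors of $f$; hence $\pi_P \in \overline{\mathcal{D}}$. Lemma~\ref{horizontalpi} then yields $q : Q := \Pi_{\pi_P}(\mathit{ev}^{*}d) \to P$ in $\mathcal{D}$, and I would propose the composite $p \circ q : Q \to Z$, which is automatically a fibration since both factors are, as the weak (strong) homotopy $\Pi$-type $\Pi_f(g)$.

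The main obstacle will be verifying that this composite satisfies the universal property of $\Pi_f(g)$ in the homotopy-theoretic sense of \cite{denbesten20}. The intuition is clear: a section of $g = g' \circ d$ over the $f$-fibre at $z$ corresponds to a section $t : f^{-1}(z) \to W$ of $g'$---classified by a point of $P_z$---together with a lift of $t$ along $d$, which in turn is a section of the pullback $t^{*}d$ over $f^{-1}(z)$, i.e.\ a fibrewise section of $\mathit{ev}^{*}d$ along $\pi_P$ above the chosen point of $P$. Making this rigorous should amount to a Beck--Chevalley-type manipulation applied to the pullback square built from $\mathit{ev}$, relative over $P$, and I expect it to follow either directly from results in \cite{denbesten20} or from a mild relative version of Lemma~5.3 of loc.\ cit., applied in the same spirit as in the proof of Lemma~\ref{horizontalpi}.
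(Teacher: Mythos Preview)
Your proposal is correct and follows essentially the same route as the paper: induct on the length of $g$, use the induction hypothesis to form $\Pi_f(g')$, pull back $d$ along the evaluation map to obtain a map in $\mathcal{D}$ over $\Pi_f(g')\times_Z Y$, and then apply Lemma~\ref{horizontalpi} to the projection, which lies in $\overline{\mathcal{D}}$ by pullback pasting. The only substantive difference is that where you anticipate needing a relative variant of Lemma~5.3 of \cite{denbesten20}, the paper simply invokes Proposition~5.2 of \cite{denbesten20}, which already establishes that $\Pi_{\pi_P}(\mathit{ev}^*d)$ (together with its structure map to $Z$) is a weak (strong) homotopy $\Pi$-type for $g'd$ along $f$; no additional argument is required.
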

  
  \begin{proof}
  Consider maps $f : I \longrightarrow J$, $g : Y \longrightarrow I$ and $d : X \longrightarrow Y$, with $f,g \in \overline{\mathcal{D}}$ and $d \in \mathcal{D}$. By induction, it suffices to show that the weak (strong) homotopy $\Pi$-type $\Pi_{f}(gd)$ exists, whenever the weak (strong) homotopy $\Pi$-type $\Pi_{f}(g)$ exists. Take the pullback
  \begin{equation*}
  \begin{tikzcd}
  Q \arrow[r] \arrow[d, swap, "e"] & X \arrow[d, "d"] \\
  (\Pi_{f} (g)) \times_{J} I \arrow[r, swap, "\varepsilon_{Y}"] & Y
  \end{tikzcd}
  \end{equation*}
  such that $e$ lies in $\mathcal{D}$. By pullback pasting, the pullback of a map in $\overline{\mathcal{D}}$ along any other map can be found in $\overline{\mathcal{D}}$, so in particular $\pi_{1} : (\Pi_{f} (g)) \times_{J} I \longrightarrow \Pi_{f} (g)$ can be found in $\overline{\mathcal{D}}$. By \ref{horizontalpi}, $\Pi_{\pi_{1}}(e)$ exists and by (the proof of) Proposition 5.2 of \cite{denbesten20}, $\Pi_{\pi_{1}}(e)$ is a weak (strong) homotopy $\Pi$-type for $gd$ and $f$.
  \end{proof}
  
    \begin{lemma}\label{onemorelemma} 
  Write $\mathcal{F}$ and $\mathcal{I}$ for the classes of fibrations and isomorphisms in a path category $\mathcal{C}$. Let $\mathcal{D}$ be a class of display maps with weak (strong) homotopy $\Pi$-types and suppose that $\mathcal{F}$ is the closure of $\mathcal{D} \cup \mathcal{I}$ under composition. Then all weak (strong) homotopy $\Pi$-types exist in $\mathcal{C}$.
  \end{lemma}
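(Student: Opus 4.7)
The plan is to combine the previous three lemmas in a rather direct way; all the heavy lifting has already been done. Concretely, I would argue as follows.

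First, I would apply Lemma \ref{addiso} to the given class $\mathcal{D}$. This produces a new class $\mathcal{D}' := \mathcal{D} \cup \mathcal{I}$ which is also a class of display maps with weak (respectively strong) homotopy $\Pi$-types. The point of this move is that by the hypothesis of the lemma, the closure of $\mathcal{D} \cup \mathcal{I}$ under composition equals $\mathcal{F}$. In the notation of Lemma \ref{verticalpi}, applied to $\mathcal{D}'$ in place of $\mathcal{D}$, we therefore have $\overline{\mathcal{D}'} = \mathcal{F}$.

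Next, to show that all weak (strong) homotopy $\Pi$-types exist in $\mathcal{C}$, it suffices to check that for every composable pair of fibrations $f, g \in \mathcal{F}$ the weak (strong) homotopy $\Pi$-type $\Pi_{f}(g)$ exists. But this is now immediate from Lemma \ref{verticalpi} applied to the class $\mathcal{D}'$: since $f, g \in \mathcal{F} = \overline{\mathcal{D}'}$, that lemma supplies the desired $\Pi_{f}(g)$.

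The only slightly subtle point is that Lemmas \ref{horizontalpi} and \ref{verticalpi} were stated for a class $\mathcal{D}$ of display maps with homotopy $\Pi$-types; in particular they require that pullbacks of maps in $\mathcal{D}$ again lie in $\mathcal{D}$ and that $\Pi_e(d)$ lies in $\mathcal{D}$ for composable $d, e \in \mathcal{D}$. Thus the main (minor) obstacle is to be sure that $\mathcal{D} \cup \mathcal{I}$ really inherits these closure properties, so that it is legitimate to feed $\mathcal{D}'$ into Lemma \ref{verticalpi}. But this has already been verified in Lemma \ref{addiso}, so there is nothing further to prove and the argument concludes.
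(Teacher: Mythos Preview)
Your proposal is correct and follows exactly the same approach as the paper's proof: replace $\mathcal{D}$ by $\mathcal{D}\cup\mathcal{I}$ using Lemma~\ref{addiso}, then invoke Lemma~\ref{verticalpi} on this enlarged class, whose closure under composition is $\mathcal{F}$ by hypothesis. The paper states this in one sentence, but the reasoning is identical.
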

  
  \begin{proof}
  This follows from \ref{verticalpi}, since we may assume that $\mathcal{D} = \mathcal{D} \cup \mathcal{I}$ by \ref{addiso}.
  \end{proof}
  
  \begin{theorem}\label{hompitypesinclasscat}
  The classifying category is a path category with weak homotopy $\Pi$-types, which will be strong as soon as function extensionality holds.
\end{theorem}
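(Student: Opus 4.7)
The plan is to assemble this theorem from the pieces already developed in Section 6, treating it as essentially a bookkeeping exercise. The path category structure itself is already given by Theorem \ref{syntacticcatpathcat}, so what remains is to upgrade the weak/strong homotopy $\Pi$-types from the restricted class of display maps to all fibrations.

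First, I would set $\mathcal{D}$ to be the class of display maps as introduced in Definition \ref{fibrationsanddisplaymaps}, and invoke Lemma \ref{pidisplaysyntacticcat} to see that $\mathcal{D}$ is a class of display maps with weak homotopy $\Pi$-types (and strong ones if function extensionality holds). Next, I would note that by the very definition of fibrations in \ref{fibrationsanddisplaymaps}, the class $\mathcal{F}$ of fibrations is obtained from $\mathcal{D}$ by closing under isomorphisms and composition; that is exactly the hypothesis ``$\mathcal{F}$ is the closure of $\mathcal{D} \cup \mathcal{I}$ under composition'' required by Lemma \ref{onemorelemma}. Applying that lemma immediately yields that all weak (strong) homotopy $\Pi$-types exist in the classifying category.

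The only point that might need a brief remark is the verification that the closure under ``isomorphisms and composition'' mentioned in Definition \ref{fibrationsanddisplaymaps} agrees with ``$\mathcal{D} \cup \mathcal{I}$ closed under composition'' in the sense of Lemma \ref{onemorelemma}: closing $\mathcal{D}$ under isomorphisms produces maps that are pullbacks of display maps along isomorphisms, and a straightforward check shows every such map is already the composite of a display map with an isomorphism, so the two notions coincide. After this observation, the proof is just a one-line appeal to \ref{syntacticcatpathcat}, \ref{pidisplaysyntacticcat}, and \ref{onemorelemma}.

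I do not expect a genuine obstacle here; the substantive work has been done in Lemmas \ref{addiso}--\ref{onemorelemma}, which abstract away all the pullback and composition manipulations, and in Lemma \ref{pidisplaysyntacticcat}, which translates the type-theoretic $\Pi$-former into a categorical one. The mildest difficulty, if any, is to make sure that the function-extensionality hypothesis is tracked consistently through the chain of lemmas so that the ``strong'' clause is inherited correctly; but since each intermediate lemma is stated uniformly for the weak or strong case, this should go through without trouble.
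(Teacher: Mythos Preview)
Your proposal is correct and matches the paper's approach exactly: the paper's proof is the one-line ``This follows from \ref{pidisplaysyntacticcat} and \ref{onemorelemma}'', and your write-up simply unpacks that citation (with the path category structure coming from Theorem~\ref{syntacticcatpathcat}). Your brief remark reconciling ``closure under isomorphisms and composition'' with ``closure of $\mathcal{D}\cup\mathcal{I}$ under composition'' is a harmless clarification that the paper leaves implicit.
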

\begin{proof}
This follows from \ref{pidisplaysyntacticcat} and \ref{onemorelemma}

\end{proof}

\section{Directions for future research}

We have shown that by eliminating the notion of definitional equality from type theory and by replacing all computation rules by propositional equalities, we obtain a system for which type checking is efficiently decidable and which has a natural homotopy-theoretic semantics. This semantics provides strong evidence that most of what happens in the HoTT book \cite{UFP13} can be formalised in a suitable extension of such an objective type theory (with univalent universes and appropriate higher inductive types, for instance). In fact, we think it is very likely that this can be done, but a detailed verification of such a claim would be an enormous amount of work.

From a theoretical perspective there are two important questions. First of all, one would like to understand this phenomenon theoretically by proving an appropriate coherence theorem showing that traditional type theory is conservative over objective type theory. Even formulating such a statement in a mathematically precise way is a non-trivial task; nevertheless, work in this direction has already been done by Valery Isaev \cite{isaev18} and Rafa\"el Bocquet \cite{bocquet20}. 

Another important question is whether objective type theory enjoys \emph{homotopy canonicity}; that is, can one effectively extract from a derivation of $\vdash t \in \mathbb{N}$ in objective type theory a natural number $n$, a term $p$ and a derivation of $\vdash p \in t =_{\mathbb{N}} {\bf succ}^n (0)$? Kapulkin and Sattler have announced a proof of this result for ordinary homotopy type theory: from what we have seen of the proof, we consider it likely that this could be adapted to objective type theory as well, although questions surrounding the effectivity of the proof might remain. 

Finally, there remains the question of whether objective type theory can be of practical importance, as a proof assistant, for instance. The fact that type checking is efficiently decidable would address some of the difficulties pointed out by Geuvers and Wiedijk. Clearly, by storing all the conversions in proof terms, these will become a lot longer than usual and one would have to think carefully about how to manage this complexity. Nevertheless, the success of proof assistants such as HOL, which do store such explicit conversions, means that it should be possible to make such systems practically useful.

\bibliography{hSetoids}

\appendix

\newpage

\section{Rules for objective type theory}

\begin{table}[h] \caption{Rules for identity types} \label{table1}
\begin{center}
\begin{tabular}{c}
Formation Rule \\ \\
$\Gamma \vdash a \in A \qquad \Gamma \vdash b \in A$ \\
\hline
$\Gamma \vdash a =_A b \in \mbox{Type}$ \\ \\
\end{tabular}
\\
\begin{tabular}{c}
Introduction Rule \\ \\
$\Gamma \vdash a \in A$ \\
\hline
$\Gamma \vdash {\bf refl}(A, a) \in a =_A a $ \\ \\
\end{tabular}
\\
\begin{tabular}{c}
Elimination Rule \\ \\
$\Gamma, x \in A, y \in A, u \in x=_A y \, \vdash P \in \mbox{Type}$ \\
$\Gamma \, \vdash p \in a=_A b$ \\
$\Gamma, x \in A \, \vdash d \in P[x, x, {\bf refl}(A, x)/x,y,u]$ \\
\hline
$\Gamma \, \vdash {\bf idrec}(A,[x,y,u]P, a, b, p,[x]d) \in P[a,b,p/x,y,u]$ \\ \\
\end{tabular}
\\
\begin{tabular}{c}
Computation Rule \\ \\
$\Gamma, x \in A, y \in A, u \in x=_A y \, \vdash P \in \mbox{Type}$ \\
$\Gamma \, \vdash a \in A$ \\
$\Gamma, x \in A \, \vdash d \in P[x, x, {\bf refl}(A, x)/x,y,u]$ \\
\hline
$\Gamma \, \vdash {\bf idconv}(A, [x,y,u]P,a,[x]d)$ \\
$\in  {\bf idrec}(A,[x,y,u]P, a,a,{\bf refl}(A, a),[x]d) =_{P[a,a,{\bf refl}(A, a)/x,y,u]} d[a/x]$
\end{tabular}
\end{center}
\end{table}

\newpage

\begin{table}[h] \caption{Rules for $\Pi$-types} \label{table2}
\begin{center}
\begin{tabular}{c}
Formation Rule \\ \\
$\Gamma \vdash A \in {\rm Type} \qquad \Gamma, x \in A \vdash B \in {\rm Type}$ \\
\hline
$\Gamma \vdash \Pi(A, [x]B) \in {\rm Type}$ \\ \\
\end{tabular}
\\
\begin{tabular}{c}
Introduction Rule \\ \\
$\Gamma, x \in A \vdash t \in B$ \\
\hline
$\Gamma \vdash \lambda(A, [x]B, [x]t) \in \Pi(A,[x]B)$ \\ \\
\end{tabular}
\\
\begin{tabular}{c}
Elimination Rule \\ \\
$\Gamma \, \vdash f \in \Pi(A,[x]B) \qquad \Gamma \, \vdash a \in A$ \\
\hline
$\Gamma \, \vdash {\bf app}(A,[x]B,f,a) \in B[a/x]$
\end{tabular}
\\
\begin{tabular}{c}
Computation Rule \\ \\
$\Gamma, x \in A \, \vdash t \in B \qquad \Gamma \, \vdash a \in A$ \\
\hline
$\Gamma \, \vdash {\bf betaconv}(A,[x]B,a,[x]t) \in  {\bf app}(A,[x]B,\lambda(A,[x]B,[x]t),a) =_{B[a/x]} t[a/x]$
\end{tabular}
\end{center}
\end{table}
  
\newpage

\section{Admissable rules for objective type theory}

\begin{table}[h] \caption{Rules for conextual identity types} \label{table5}
\begin{center}
\begin{tabular}{c}
Formation Rule \\ \\
$\Gamma \vdash a \in A \qquad \Gamma \vdash b \in A$ \\
\hline
$\Gamma \vdash a =_A b \in \mbox{Type}$ \\ \\
\end{tabular}
\\
\begin{tabular}{c}
Introduction Rule \\ \\
$\Gamma \vdash a \in A$ \\
\hline
$\Gamma \vdash {\bf refl}(A, a) \in a =_A a $ \\ \\
\end{tabular}
\\
\begin{tabular}{c}
Elimination Rule \\ \\
$\Gamma, x \in A, y \in A, u \in x=_A y, \Delta \, \vdash P \in \mbox{Type}$ \\
$\Gamma, x \in A, \vec{z} \in \Delta[x,x,{\bf refl}(A, x)/x,y,u] \, \vdash d \in P[x, x, {\bf refl}(A, x)/x,y,u]$ \\
$\Gamma \, \vdash p \in a=_A b$ \\
$\Gamma \, \vdash \vec{q} \in \Delta[a,b,p/x,y,u]$ \\
\hline
$\Gamma \, \vdash {\bf idrec}(A,[x,y,u]\Delta,[x, y, u, \vec{z}]P, a, b, p, \vec{q}, [x, \vec{z}]d) \in P[a,b,p,\vec{q}/x,y,u,\vec{z}]$ \\ \\
\end{tabular}
\\
\begin{tabular}{c}
Computation Rule \\ \\
$\Gamma, x \in A, y \in A, u \in x=_A y \, \vdash P \in \mbox{Type}$ \\
$\Gamma, x \in A, \vec{z} \in \Delta[x,x,{\bf refl}(A, x)/x,y,u] \, \vdash d \in P[x, x, {\bf refl}(A, x)/x,y,u]$ \\
$\Gamma \, \vdash a \in A$ \\
$\Gamma \, \vdash \vec{w} \in \Delta[a, a, {\bf refl}(A, a)/x,y,u]$ \\
\hline
$\Gamma \, \vdash {\bf idconv}(A, [x,y,u]P,a,[x]d)$ \\
$ \in  {\bf idrec}(A,[x,y,u]\Delta, [x,y,u,\vec{z}]P, a, a, {\bf refl}(A, a), \vec{w}, [x, \vec{z}]d) =_{P[a,a,{\bf refl}(A, a),\vec{w}/x,y,u,\vec{z}]} d[a, \vec{w}/x, \vec{z}]$
\end{tabular}
\end{center}
\end{table}

\newpage

\begin{table}[h] \caption{Rules for contextual $\Pi$-types} \label{table4}
\begin{center}
\begin{tabular}{c}
Formation Rule \\ \\
$\vdash \Gamma, \Delta \, {\rm Ctxt} \qquad \Gamma, \vec{x} \in \Delta \vdash B \in {\rm Type}$ \\
\hline
$\Gamma \vdash \Pi(\Delta, [\vec{x}]B) \in {\rm Type}$ \\ \\
\end{tabular}
\\
\begin{tabular}{c}
Introduction Rule \\ \\
$\Gamma, \vec{x} \in \Delta \vdash t \in B$ \\
\hline
$\Gamma \vdash \lambda(\Delta, [\vec{x}]B, [\vec{x}]t) \in \Pi(\Delta,[\vec{x}]B)$ \\ \\
\end{tabular}
\\
\begin{tabular}{c}
Elimination Rule \\ \\
$\Gamma \, \vdash f \in \Pi(\Delta,[\vec{x}]B) \qquad \Gamma \, \vdash \vec{a} \in \Delta$ \\
\hline
$\Gamma \, \vdash {\bf app}(\Delta,[\vec{x}]B,f,\vec{a}) \in B[\vec{a}/\vec{x}]$ \\ \\
\end{tabular}
\\
\begin{tabular}{c}
Computation Rule \\ \\
$\Gamma, \vec{x} \in \Delta \, \vdash t \in B \qquad \Gamma \, \vdash \vec{a} \in \Delta$ \\
\hline
$\Gamma \, \vdash {\bf betaconv}(\Delta,[\vec{x}]B,a,[\vec{x}]t) \in  {\bf app}(\Delta,[\vec{x}]B,\lambda(\Delta, [\vec{x}]B, [\vec{x}]t),\vec{a}) =_{B[\vec{a}/\vec{x}]} t[\vec{a}/\vec{x}]$
\end{tabular}
\end{center}
\end{table}

\end{document}